\newcommand{\vars}{\operatorname{vars}}

\documentclass[a4paper,UKenglish,cleveref, autoref, thm-restate]{lipics-v2021}

\title{Proofdoors and Efficiency of CDCL Solvers} 

\author{Sunidhi {Singh}}{Georgia Institute of Technology, USA  }{sunidhisingh@gatech.edu}{}{}

\author{Vincent Liew}{Independent Researcher}{vliew@vliew.com}{}{}

\author{Marc Vinyals}{University of Auckland, New Zealand}{marc.vinyals@auckland.ac.nz}{}{}

\author{Vijay Ganesh}{Georgia Institute of Technology, USA}{vganesh45@gatech.edu}{}{}

\authorrunning{S. Singh, V. Liew, M. Vinyals, and V. Ganesh} 

\Copyright{Sunidhi Singh, Vincent Liew, Marc Vinyals, and Vijay Ganesh} 

\ccsdesc[500]{Theory of computation~Proof complexity}
\ccsdesc[300]{Hardware~Theorem proving and SAT solving}
\ccsdesc[300]{Hardware~Equivalence checking}

\keywords{SAT solving, CDCL, proof complexity, interpolation, resolution} 

\category{} 

\relatedversion{} 

\acknowledgements{We would like to thank Antonina Kolokolova, Noah Fleming, Robert Robere, Ryan Williams, Moshe Vardi, and Ken McMillan }

\usepackage{tikz}
\usetikzlibrary{positioning,arrows.meta,fit,calc,shapes.callouts}
\usepackage{adjustbox} 
\usepackage{longtable}
\usepackage{array}
\usepackage{amsmath}
\newcommand{\poly}{\mathrm{poly}}
\usepackage{refcount}

\newcommand{\MultGraph}{\textsc{Mult-Graph}}

\newcommand{\AddGraph}{\textsc{Add-Graph}}

\newcommand{\Vbefore}{V_{\mathrm{before}}}
\newcommand{\Vafter}{V_{\mathrm{after}}}
\newcommand{\Vshared}{V_{\mathrm{shared}}}

\newcommand{\EQ}{\mathrm{EQ}}
\newcommand{\Res}{\mathsf{Res}}

\EventEditors{John Q. Open and Joan R. Access}
\EventNoEds{2}
\EventLongTitle{42nd Conference on Very Important Topics (CVIT 2016)}
\EventShortTitle{CVIT 2016}
\EventAcronym{CVIT}
\EventYear{2016}
\EventDate{December 24--27, 2016}
\EventLocation{Little Whinging, United Kingdom}
\EventLogo{}
\SeriesVolume{42}
\ArticleNo{23}

\begin{document}

\maketitle

\begin{abstract}

We propose a new parameter called \emph{proofdoor} in an attempt to explain the efficiency of CDCL SAT solvers over a certain class of formulas derived from circuit (esp., arithmetic) verification applications. Informally, given an unsatisfiable CNF formula $F$ over $n$ variables, a {\it proofdoor decomposition} consists of a chunking of the clauses into $A_1,\ldots,A_k$ together with a {\it sequence of interpolants connecting} these chunks.
Intuitively, a proofdoor captures the idea that an unsatisfiable formula can be refuted by reasoning chunk by chunk, while maintaining only a summary of the information (i.e., interpolants) gained so far for subsequent reasoning steps.

We prove several theorems in support of the proposition that proofdoors can explain the efficiency of CDCL solvers for some class of circuit verification problems. 
First, we show that formulas with \emph{small proofdoors} (i.e., where each interpolant is $O(n)$ sized, each chunk $A_i$ has small pathwidth, and each interpolant clause has at most  $O(\log n)$ backward dependency on the previous interpolant), have short resolution (Res) proofs and a certain configuration of CDCL solvers can compute such proofs in time polynomial in $n$. Second, we show that commutativity (miter) formulas over floating-point addition have small proofdoors and hence short Res proofs, even though they have large pathwidth. Third, we identify limits of the proofdoor framework: we show that a poor decomposition of arithmetic miter instances can force exponentially large interpolants, and hence our framework derives exponentially large Res refutations from such a decomposition, even when a different decomposition
(i.e., a small proofdoor) yields short proofs. As a byproduct, these interpolant lower bounds imply new lower bounds for the partially ordered resolution proof system.

\end{abstract}

\section{Introduction}

The Boolean Satisfiability Problem (SAT) is the canonical NP-complete problem ~\cite{10.1145/800157.805047}, and it is conjectured to require exponential time in the worst case under the Exponential Time Hypothesis (ETH)~\cite{DBLP:journals/jcss/ImpagliazzoP01}. Modern SAT solvers are predominantly based on Conflict-Driven Clause Learning (CDCL) and it has been shown that they are polynomially equivalent to the resolution (Res) proof system under non-deterministic branching and restarts~\cite{pipatsrisawat2009power}. Further, it was recently shown that the problem of automating Resolution, i.e., finding a short refutation even when one exists, is NP-hard~\cite{DBLP:journals/jacm/AtseriasM20}. From this perspective, complexity theory predicts strong limitations on the scalability of CDCL solvers.

Despite these barriers, modern CDCL solvers routinely solve large real-world instances arising from hardware verification, software analysis, and planning, often involving millions of variables and clauses ~\cite{clarke2001bounded,kautz1992planning,xie2005saturn}. This  gap between worst-case hardness and empirical performance motivates the central question at the intersection of theory and practice of SAT solving, namely, {\it what properties of real-world SAT instances make them tractable for CDCL?}

A natural explanation for this apparent contradiction is that NP-completeness captures only worst-case behavior. By contrast, the efficiency of CDCL is witnessed on large, structured instances that arise in practice ~\cite{biere2009handbook}. However, despite decades of effort, we still lack a rigorous mathematical characterization of which real-world instances are tractable for CDCL and why they admit proofs that CDCL can reliably discover.

In this work, we attempt a proof-theoretic explanation for the efficiency of CDCL
based on the observation that solvers often reason through formulas in a sequential, local manner. Rather than solving the formula globally, CDCL  progressively processes parts of the instance while maintaining a small
summary of the information learned so far. We introduce the notion of
\emph{proofdoors}, which capture such reasoning via a sequential decomposition
of a formula together with small interpolants summarizing the information
learned so far. We show that formulas admitting \emph{small proofdoors} -- a
special case of proofdoors -- have short resolution refutations, and that under suitable (non-deterministic) configurations CDCL can efficiently refute such formulas. We
also analyze the limitations of such decompositions.

\paragraph*{Existing Explanations}

Early attempts to explain SAT-hardness focused on statistical properties of random $k$-SAT formulas, such as the clause-variable ratio. Experiments showed that as this ratio increases, the formulas undergo a phase transition, i.e., below a certain ratio they are typically satisfiable and above the ratio they are typically unsatisfiable, and solver runtimes peak near the threshold~\cite{mitchell1992hard,cheeseman1991really}. This suggested that hardness might arise from formulas lying near this region. However, these phenomena apply primarily to random instances and fail to explain the difficulty of industrial formulas, whose hardness is largely uncorrelated with density-based measures.

Another line of work studies graph-theoretic width parameters such as treewidth, pathwidth, and cutwidth of the primal or incidence graph of a CNF formula. For formulas of bounded width, SAT can be solved in polynomial time via the Davis–Putnam algorithm~\cite{alekhnovich2002satisfiability,davis1958feasible}. Huang and Darwiche augment DPLL with a caching scheme to construct 
OBDDs for CNF formulas, yielding complexity exponential only in the 
cutwidth or pathwidth of the variable ordering~\cite{huang2004dpll}. 
Prasad et al. also show that caching-based SAT algorithms with static 
variable orderings run in time $O(n \cdot 2^{2k_{\mathrm{fo}} \cdot w})$ 
where $w$ is the cutwidth and $k_{\mathrm{fo}}$ the maximum 
fanout~\cite{prasad2001combinational}. Wang et al. use a modified learning scheme on a DPLL style solver, yielding a time complexity exponential only in the cutwidth $w$ ~\cite{wang2001cutwidth}. However, real-world formulas typically have large  width yet remain easy for CDCL, for example, BMC formulas \cite{biere2009bounded}, indicating that  bounded width alone does not explain solver performance.

Yet another set of theoretical concepts is backdoors and backbones, which capture the idea that a small set of variables controls the complexity of a formula. Backdoors identify variables whose assignment reduces the formula to a tractable class, while backbones consist of variables that take the same value in all satisfying assignments~\cite{kilby2005backbones}. While these notions admit strong theoretical guarantees, empirical studies suggest that industrial instances often have large backdoors, and backdoor size correlates weakly with CDCL runtime. In contrast, backbone variables appear more frequently in practice, but their presence alone does not explain how CDCL efficiently discovers the relevant assignments \cite{zulkoski2018learning}.

One empirical explanation is based on community structure, where formulas decompose into loosely connected clusters of variables and clauses.  In their seminal work, Ans'otegui et al.~\cite{ansotegui_community_2019} established that most industrial instances exhibit high modularity, and proposed a modularity-based preprocessor that forces the solver to learn clauses connecting pairs of communities, yielding empirical improvements especially on satisfiable instances. Several studies report correlations between solver performance and measures of modularity or hierarchical structure, suggesting that CDCL benefits from local reasoning within communities~\cite{newsham2014impact}. However, there exist strong counterexamples showing that community structure cannot be used as the foundation of a theory~\cite{mull2016hardness}.

Another suggested parameter, mergeability, measures the extent to which resolved clauses can be combined during proof construction. Merge-based proof systems admit theoretical analysis and yield insights into certain solver behaviors \cite{vinyals_et_al:LIPIcs.SAT.2023.27}. Yet it is unclear whether industrial instances consistently exhibit high mergeability, and empirical evidence connecting merge counts to solver efficiency is limited \cite{zulkoski2018effect}.

From a proof-theoretic perspective, CDCL solvers generate resolution proofs, and lower bounds on resolution proof size translate directly to lower bounds on solver runtime. Such results successfully explain solver failures on well-known hard families, including Tseitin formulas and the pigeonhole principle \cite{haken1985intractability,tseitin1983complexity}, but offer little insight into why many industrial formulas admit short resolution proofs that CDCL can find efficiently in practice. 

To summarise, despite decades of work, no existing parameter simultaneously captures the structure of real-world SAT instances and provides rigorous guarantees on proof size or proof search, leaving a persistent gap between empirical observations and theoretical understanding.


\begin{enumerate}
    \item {\bf Small Proofdoors and Upper Bounds on Resolution Proof Size:}  Our main structural result shows that if an unsatisfiable formula has small proofdoors, it admits a polynomial-size resolution refutation (Theorem~\ref{thm:proofdoor-main}, Corollary~\ref{thm:smallpfd}, Proofdoor Theorem, Small Proofdoor Corollary).
    

    \item {\bf FP Small Proofdoors:} We show SAT encodings of commutativity of floating point addition satisfy the conditions of our main theorem and thus admit short proofs (Theorem~\ref{thm:smallpfdfloat}, FP Small Proofdoor Theorem).

    \item {\bf Proofdoors and Lower Bounds:} 
    We demonstrate that the choice of decomposition is crucial by proving exponential interpolant size lower bounds corresponding to
    a natural class of decompositions of arithmetic miter instances
    (Theorem~\ref{thm:miter_lower_bound}, Miter Lower Bound). This forces the refutations derived by Theorem~\ref{thm:proofdoor-main} to be exponentially large, even when a different decomposition
    (i.e., a small proofdoor) yields short proofs (Theorem~\ref{thm:proofdoor_lb}). As a byproduct, we show
    that these interpolant size lower bounds imply lower bounds for the partially
    ordered resolution proof system (Lemma~\ref{lem:interpolants}),
    generalizing a lower bound of Janota~\cite{janota2016exponential}.

    \item {\bf Undecidability:}  Finally, we show that deciding whether a general formula family admits polynomial-size proofs is undecidable (Theorem~\ref{thm:fpsp}, FPSP Undecidability Theorem), highlighting fundamental limits on any complete characterization of the efficiency of CDCL solvers.
\end{enumerate}

\paragraph*{Related Work}

A primary inspiration for our work came from SAT encodings of arithmetic identities, such as the commutativity of integer multiplication (i.e., $xy \neq yx$), as studied in \cite{beame2019toward}. These instances surprisingly admit short resolution proofs despite being conjectured otherwise. The key proof idea is to decompose the formula into structured chunks corresponding to the circuit components and derive the equality of the output bits incrementally. Each chunk involves only a small pathwidth subcircuit, and the final proof of unsatisfiability can be obtained by combining the local proofs. This can also be understood as computing interpolants between neighboring chunks. Our Theorem  \ref{thm:proofdoor-main} generalizes this insight that formulas can be decomposed into bounded pathwidth fragments where each interpolant clause depends on only $O(\log n)$ previous interpolant clauses.  Miter encodings for commutativity of integer multiplication fall within this framework (see Appendix~\ref{app:miterasthm1}).

The classical Davis Putnam algorithm  \cite{davis1958feasible}, that is the foundation for all CDCL SAT solvers, operates by eliminating variables in a fixed order through Resolution, effectively computing intermediate resolvents that are implied by a part of the formula. The elimination process can be viewed as computing interpolants; when a variable is eliminated, the remaining clauses encode all the consequences of the formula only over the surviving variables. While Davis-Putnam is not polynomial-time in general, makes no attempt to compute small proofdoors, and may incur exponential blowup, upon reflection, it serves as the first example of proof generation via proofdoors. 

Another relevant work is the well known paper by Ken McMillan~\cite{mcmillan2003interpolation} on interpolation model checking, which introduced the explicit use of interpolants extracted from resolution proofs to refine over-approximations in SAT-based bounded model checking (BMC). In that setting, interpolants summarize the information obtained from one unrolling of the transition system and are used to strengthen the constraints in the next unrolling. However, their work didn't attempt to explain the power of CDCL SAT solvers. Further, their system explicitly computes interpolants to speedup the underlying solver. By contrast, we provide mathematical proofs that certain class of verification instances admit small proofdoors and hence short resolution proofs. Further, an idealized version of CDCL can leverage them to find short proofs.

\section{Preliminaries}

We denote Boolean variables by lowercase letters (e.g., $x$). Bit-vectors are written in boldface (e.g., $\mathbf{x}$), with subscripts denoting individual bits (e.g., $x_i$). Other sets of variables are denoted by uppercase letters (e.g., $S$). A literal is a Boolean variable or its negation. A clause is a disjunction (or set) of literals, and a CNF formula is a conjunction (or set) of clauses. For a clause $C$ or formula $F$, we write $C\upharpoonright_{\alpha}$ or $F\upharpoonright_{\alpha}$
for their restrictions under $\alpha$.

A \emph{resolution refutation} of a CNF formula $F$  is a sequence of clauses $\pi=(D_1,\dots,D_L)$ with $D_L=\bot$ such that each $D_i$ is either a clause of $F$, or is obtained by resolving two earlier clauses on some variable $x$, i.e.,
$D_i=\Res_x(D_j,D_k)$ for some $j,k<i$, where $\Res_x(B\vee x,\; C\vee \neg x):=B\vee C$.

Resolution can be visualized as either a sequential list of clauses or a directed acyclic graph (DAG), where each vertex represents a clause and edges represent the derivation dependencies. A resolution refutation is \emph{ordered} if along any leaf-to-root path in its DAG, the variables are resolved in an order consistent with a fixed total ordering of the variables. 

The notion of ``proofdoors'' for SAT solving is closely connected to \emph{partially ordered resolution}~\cite{janota2016exponential}, a relaxation of ordered resolution in which the variable order is a partial order rather than a total order. 

\begin{definition}
    Let $\prec$ be a partial order on the variables of a CNF formula $\Phi$. A resolution proof $\pi$ is { \em $\prec$-ordered} if, on every leaf-to-root path $P$ in $\pi$, the sequence of resolved variables respects the partial ordering $\prec$: whenever $x\prec y$ and both $x$ and $y$ are resolved on along $P$, the path $P$ resolves on $x$ before $y$.
\end{definition}

\begin{definition}
    Let $X, Y$ be disjoint sets of Boolean variables. Denote by $X \prec Y$ the partial ordering $\{x \prec y | x \in X, y \in Y\}$.
\end{definition}


\begin{definition}[Interpolant \cite{craig1957three}]
Let $\Phi(X,Y,Z) = A(X,Z) \wedge B(Y,Z)$ be an unsatisfiable CNF formula. An \emph{interpolant} for $(A,B)$ is a Boolean function $I(Z)$ over the shared variables $Z$ such that $A(X,Z)\models I(Z)$ and $I(Z)\wedge B(Y,Z)$ is unsatisfiable.
\end{definition}
Depending on context, ``interpolant'' may refer to either the Boolean function $I(Z)$ itself, or to a CNF formula that computes $I$. This notion captures the idea of separating a formula into two parts and summarizing the contribution of one part to the overall contradiction. In this sense, interpolants formalize the idea of reasoning across a cut
using only the information that must be communicated between the two sides.


\noindent To reason about the structural complexity of formulas and proofs, we will make use of the notion of pathwidth. 

\begin{definition}[Pathwidth \cite{robertson1983graphminors}]
A \emph{path decomposition} of an undirected graph $G=(V,E)$ is a sequence of
vertex sets $(B_1,\dots,B_m)$ , called \emph{bags},  such that:

\begin{enumerate}
\item $\bigcup_i B_i = V$,
\item for every edge $\{u,v\}\in E$, some $B_i$ contains both $u$ and $v$,
\item for every $v\in V$, the indices $\{ i \mid v\in B_i \}$ form a contiguous interval.
\end{enumerate}

\noindent The \emph{pathwidth} of $G$ is $\min_{\mathcal{P}}  \max_i (|B_i|-1)$ over all path decompositions $\mathcal{P}$.
\end{definition}

\begin{definition}[Clause Variable Incidence Graph]
Let $F$ be a CNF formula with variable set $X$ and clause set $\mathcal{C}$.
The \emph{clause variable incidence graph} of $F$ is the bipartite graph
$G_F = (X \cup \mathcal{C},\, E)$ where $\{x,C\} \in E$ if and only if
the variable $x$ occurs (positively or negatively) in the clause $C$.
\end{definition}

The pathwidth of a CNF formula refers to the pathwidth of its clause variable incidence graph. 



\section{Proofdoors}
We now introduce the central notion of this work by generalizing the idea of a single interpolant between two partitions of an unsatisfiable formula into a sequence of interpolants, allowing us to reason through a formula chunk by chunk. Intuitively, a proofdoor captures the idea that an unsatisfiable formula can be refuted by reasoning sequentially through its components or chunks, while maintaining only a summary of the information gained so far.

\begin{definition}[\bf{Proofdoor Decomposition}]\label{def:proofdoordecomp}

A \emph{proofdoor decomposition} of an unsatisfiable CNF formula $F$ is an expression 
$$F \;=\; A_1(L_1,S_1)\ \wedge\ A_2(L_2,S_2)\ \wedge\ \cdots\ \wedge\ A_{k-1}(L_{k-1},S_{k-1})\ \wedge\ A_k(L_k),
$$
together with a sequence of interpolants 
$I_1(S_1), I_2(S_2), \dots, I_{k-1}(S_{k-1}),$ represented as CNFs, where each $A_i$ is a CNF formula over the indicated variable sets, each $L_i$ is the set of variables whose last appearance is in $A_i$, and each
$S_i := \vars(A_1 \wedge \cdots \wedge A_i) \cap
              \vars(A_{i+1} \wedge \cdots \wedge A_k)$
is the set of variables shared across the $i$-th cut. The interpolants satisfy that $I_1$ interpolates from $A_1$ to $A_{2} \land \dots \land A_k$, and each subsequent $I_j$ interpolates from $I_{j-1} \land A_j$ to $A_{j+1} \land \dots \land A_k$.

Define $R_i$ to be the set of variables whose first appearance is in a chunk $A_j$ with $j > i$. A proofdoor decomposition induces the set of \emph{cutting partial orders}
$\bigl\{\, L_i \prec\ R_i \bigr\}$ for $i = 1, \ldots, k-1$.
\end{definition}

 This imposes no restrictions on the size of the interpolants or number of interpolants or on the structure of the decomposition.

\begin{definition}[\bf{Proofdoors}]\label{def:proofdoors}
Let $F$ be an unsatisfiable CNF formula over $n$ variables, and let $c,w,s,k \ge 1$ be integers.
We say that $F$ \emph{admits a proofdoor with parameters $(c,w,s,k)$}
if there exists a proofdoor decomposition into  $k$ chunks
$F = A_1 \wedge \cdots \wedge A_k$
with interpolants $(I_1,\ldots,I_{k-1})$ satisfying:

\begin{enumerate}

\item $\forall j \in \{1,\ldots,k-1\}$, the interpolant $I_j$ contains at most $c$ clauses.

\item $\forall j \in \{2,\ldots,k-1\}$ and every clause $C \in I_j$,
there exists a set $S(C) \subseteq I_{j-1}$ of size at most $s$ such that
$A_j \wedge S(C) \models C$.  We call $S(C)$ the \emph{backward dependency} 
of $C$ on $I_{j-1}$.

\item The interpolant $I_{k-1}$ contains at most $s$ clauses.

\item $\forall j \in \{1,\ldots,k\}$, the chunk $A_j$ has clause-variable incidence pathwidth at most $w$.

\end{enumerate}
\end{definition}

\begin{remark}
Every unsatisfiable CNF formula $F$ over $n$ variables and $m$ clauses 
trivially admits a proofdoor with parameters $(c,w,s,k) = (0,\, m+n,\, 0,\, 1)$: 
take $k=1$ so $F = A_1$ with no interpolants, and the conditions on $c$ and $s$ 
are vacuously satisfied. The pathwidth of $A_1$ is at most $m+n$. 
\end{remark}

The conditions above are independent. If the interpolants are unrestricted in size, a proofdoor may exist trivially by doing DP elimination over all non-shared variables from the prefix. Even if each chunk has a small pathwidth, 
small interpolants need not exist without additional structural
constraints. In particular, arbitrary partitions of $F$ do not
guarantee small proofdoors. For example, partitioning the formula into
single clauses does not prevent the interpolants from becoming large,
and placing most clauses on one side of a cut may induce large pathwidth.
Thus, both bounded interpolant size and bounded pathwidth are required.  Note that this definition allows the interpolant clauses to repeat among the interpolants. We consider the following decision problem. 

\begin{theorem}[\bf{Proofdoor computation is NP-hard}]\label{np:hard}
Determining whether an unsatisfiable CNF formula $F$,
together with a partition $F = A_1 \wedge \dots \wedge A_k$
and integers $c,w,s,k$, admits a proofdoor
with respect to $c,w,s$, and $k$
is NP-hard.
\end{theorem}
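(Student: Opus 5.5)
The plan is to obtain NP-hardness from condition~4 of Definition~\ref{def:proofdoors} alone. The statement fixes the partition $F = A_1 \wedge \dots \wedge A_k$ as part of the input, and with $k=1$ it asks exactly whether $F$ admits a proofdoor with parameters $(c,w,s,1)$. In that case there are no interpolants, so conditions 1--3 hold vacuously, as in the Remark after Definition~\ref{def:proofdoors}. So the question collapses to whether the clause-variable incidence graph of $F$ has pathwidth at most $w$. Deciding pathwidth is NP-complete, and I will reduce from it.

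\textbf{Source problem.} I reduce from \textsc{Pathwidth} restricted to bipartite graphs: given a bipartite graph $H=(P\cup Q,E)$ and an integer $w\ge 1$, decide whether $\mathrm{pw}(H)\le w$. I plan to cite the literature for NP-completeness on bipartite (indeed chordal bipartite) graphs, via Kloks et al. A fallback is given at the end.

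\textbf{Reduction.}
\begin{enumerate}
\item Delete isolated vertices of $H$. This does not change the pathwidth, since an isolated vertex can sit alone in its own bag, and we may assume $w\ge 1$.
\item Introduce a Boolean variable $x_p$ for each $p\in P$. For each $q\in Q$, introduce the clause $C_q=\bigvee_{p:\{p,q\}\in E} x_p$. Each $C_q$ is nonempty because isolated vertices were removed, and the clauses $C_q$ are distinct objects because the formula is a set of clause nodes. The incidence graph of $G=\{C_q\}_{q\in Q}$ is then exactly $H$.
\item Make the formula unsatisfiable by adding a disjoint gadget on a fresh variable $z$, namely the clauses $(z)$ and $(\neg z)$. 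This gadget is a path on three vertices, of pathwidth $1$.
\item Output $F=G\wedge(z)\wedge(\neg z)$, the trivial partition with $k=1$ and $A_1=F$, the integer $w$, and arbitrary $c,s$, say $c=s=0$.
\end{enumerate}

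\textbf{Correctness.} $F$ is unsatisfiable by construction. The incidence graph of $F$ is the disjoint union of $H$ and a path, so its pathwidth is $\max(\mathrm{pw}(H),1)=\mathrm{pw}(H)$ when $w\ge1$. Hence $F$ admits a proofdoor with these parameters and this partition iff $\mathrm{pw}(H)\le w$. The construction is clearly polynomial time.

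\textbf{Main obstacle.} The step I expect to need care is citing the right hardness result. General-graph pathwidth NP-completeness (Arnborg--Corneil--Proskurowski) must transfer to incidence graphs, which are bipartite. If the bipartite citation is unavailable, I would reduce from general \textsc{Pathwidth} by subdividing every edge of $G$ once, turning each edge into a $2$-literal clause. The catch is that subdivision can raise pathwidth by one, so I would need a cleaner gadget. One option is to replace each original vertex by a large clique-like biclique blow-up, so that pathwidth scales predictably. Getting an exact equivalence $\mathrm{pw}(G)\le w \iff \mathrm{pw}(\text{incidence})\le w'$ is where the real work lies.
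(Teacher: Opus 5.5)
Your proposal is correct and follows the paper's route: the paper also specializes to $k=1$, so the interpolant conditions become vacuous and the question reduces to whether the incidence graph of $F$ has pathwidth at most $w$, and then cites NP-hardness of pathwidth on bipartite graphs. Your explicit realization of an arbitrary bipartite $H$ as an incidence graph, together with the disjoint $(z)\wedge(\neg z)$ gadget that makes $F$ unsatisfiable, fills in details the paper's sketch leaves implicit, and given the bipartite hardness citation your fallback reduction from general graphs is unnecessary.
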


\begin{proof}[Proof Sketch]
Consider the special case $k = 1$, so $F = A_1$.
Then the question reduces to whether $F$ has pathwidth at most $w$.
Since deciding whether a bipartite graph has pathwidth at most $w$
is NP-hard \cite{bodlaender1991approximating}, the result follows.
\end{proof}


\noindent We now give a structural condition under which a particular configuration of the CDCL SAT solver can efficiently refute a formula.

\begin{definition}[\bf{Small Proofdoors}]\label{def:smallpfd}
An unsatisfiable CNF formula $F$ with $n$ variables
admits \emph{small proofdoors} if it has a proofdoor
with parameters $(c,w,s,k)$ satisfying $c = O(n),  w = O(\log n), s = O(\log n), k = O(n)$.
\end{definition}

\section{Proofdoor Theorem}

\begin{figure}[t]
    \centering

    \begin{tikzpicture}[
      scale=0.7, transform shape,
      font=\scriptsize,
      chunk/.style={
        draw, rounded corners=2pt,
        minimum width=1.10cm, minimum height=0.68cm,
        fill=blue!10, align=center
      },
      interp/.style={
        draw, rounded corners=2pt,
        minimum width=0.50cm, minimum height=1.95cm,
        fill=blue!18
      },
      slice/.style={draw=blue!40, line width=0.22pt},
      active/.style={fill=blue!35, opacity=0.9, draw=blue!60, line width=0.3pt},
      arr/.style={-Latex, thick, color=blue!70},
      dots/.style={font=\scriptsize},
      cone/.style={fill=red!70, opacity=0.25, draw=none}
    ]
    
    \newcommand{\SliceInterp}[2]{%
      \foreach \t in {1,...,#2} {
        \draw[slice]
          ($(#1.south west)!\t/(#2+1)!(#1.north west)$) --
          ($(#1.south east)!\t/(#2+1)!(#1.north east)$);
      }%
    }
    
    \node[chunk] (A1) {$A_1$\\[-1mm]\tiny pw $\le w$};
    \node[chunk, right=8mm of A1] (A2) {$A_2$\\[-1mm]\tiny pw $\le w$};
    \node[dots,  right=5mm of A2] (D1) {$\cdots$};
    \node[chunk, right=6mm of D1] (Ajm1) {$A_{j-1}$\\[-1mm]\tiny pw $\le w$};
    \node[chunk, right=8mm of Ajm1] (Aj) {$A_j$\\[-1mm]\tiny pw $\le w$};
    \node[dots,  right=7mm of Aj] (D2) {$\cdots$};
    
    \node[chunk, right=8mm of D2]   (Akm2) {$A_{k-2}$\\[-1mm]\tiny pw $\le w$};
    \node[chunk, right=8mm of Akm2] (Akm1) {$A_{k-1}$\\[-1mm]\tiny pw $\le w$};
    \node[chunk, right=8mm of Akm1] (Ak)   {$A_k$\\[-1mm]\tiny pw $\le w$};
    
    \node[interp, below=7mm of A1]   (I1) {};
    \node[interp, below=7mm of A2]   (I2) {};
    \node[dots,  below=7mm of D1]    (Id1) {$\cdots$};
    \node[interp, below=7mm of Ajm1] (Ijm1) {};
    \node[interp, below=7mm of Aj]   (Ij) {};
    \node[dots,  below=7mm of D2]    (Id2) {$\cdots$};
    
    \node[interp, below=7mm of Akm2] (Ikm2) {};
    
    \node[
      draw, rounded corners=2pt,
      minimum width=0.50cm,
      minimum height=0.95cm, 
      fill=blue!18,
      below=7mm of Akm1
    ] (Ikm1) {};
    
    \node[below=1mm of I1]   {$I_1$};
    \node[below=1mm of I2]   {$I_2$};
    \node[below=1mm of Ijm1] {$I_{j-1}$};
    \node[below=1mm of Ij]   {$I_j$};
    \node[below=1mm of Ikm2] {$I_{k-2}$};
    \node[below=1mm of Ikm1] {$I_{k-1}$};
    
    \node[below=3.5mm of Ikm1, font=\tiny, color=blue!80]
    {$\leq s$ clauses};

    \node[below=3.5mm of I1, font=\tiny, color=blue!80]
    {$\leq c$ clauses};
    \node[below=3.5mm of I2, font=\tiny, color=blue!80]
    {$\leq c$ clauses};
    \node[below=3.5mm of Ijm1, font=\tiny, color=blue!80]
    {$\leq c$ clauses};
    \node[below=3.5mm of Ij, font=\tiny, color=blue!80]
    {$\leq c$ clauses};
    \node[below=3.5mm of Ikm2, font=\tiny, color=blue!80]
    {$\leq c$ clauses};
    
    \SliceInterp{I1}{6}
    \SliceInterp{I2}{6}
    \SliceInterp{Ijm1}{6}
    \SliceInterp{Ij}{6}
    \SliceInterp{Ikm2}{6}
    \SliceInterp{Ikm1}{3} 
    
    \draw[arr] (A1.south)   -- (I1.north);
    \draw[arr] (A2.south)   -- (I2.north);
    \draw[arr] (Ajm1.south) -- (Ijm1.north);
    \draw[arr] (Aj.south)   -- (Ij.north);
    
    \draw[arr] (Akm2.south) -- (Ikm2.north);
    \draw[arr] (Akm1.south) -- (Ikm1.north);
    
    \draw[arr] (I1.east)   -- (I2.west);
    \draw[arr] (I2.east)   -- (Id1.west);
    \draw[arr] (Id1.east)  -- (Ijm1.west);
    \draw[arr] (Ijm1.east) -- (Ij.west);
    \draw[arr] (Ij.east)   -- (Id2.west);
    \draw[arr] (Id2.east)  -- (Ikm2.west);
    \draw[arr] (Ikm2.east) -- (Ikm1.west);

    \path (Ijm1.south west) coordinate (SWp);
    \path (Ijm1.south east) coordinate (SEp);
    \path (Ijm1.north west) coordinate (NWp);
    \path (Ijm1.north east) coordinate (NEp);
    
    \path ($(SWp)!5/7!(NWp)$) coordinate (ActLowL);
    \path ($(SEp)!5/7!(NEp)$) coordinate (ActLowR);
    
    \fill[active] (ActLowL) rectangle (NEp);
    
    \node[font=\tiny, color=blue!80, align=right]
      at ($(ActLowL)+(-0.25cm,0.20cm)$) {$\le s$};
    
    \path (Ij.south west) coordinate (SWj);
    \path (Ij.south east) coordinate (SEj);
    \path (Ij.north west) coordinate (NWj);
    \path (Ij.north east) coordinate (NEj);
    
    \path ($(SWj)!3/7!(NWj)$) coordinate (CloL);
    \path ($(SEj)!3/7!(NEj)$) coordinate (CloR);
    \path ($(SWj)!4/7!(NWj)$) coordinate (ChiL);
    \path ($(SEj)!4/7!(NEj)$) coordinate (ChiR);
    
    \fill[blue!25] (CloL) rectangle (ChiR);
    \draw[blue!80, line width=0.6pt] (CloL) rectangle (ChiR);
    \node[font=\scriptsize, color=blue!85] at ($(CloL)!0.5!(ChiR)$) {$C$};
    
    \coordinate (Tip) at ($(CloL)!0.5!(ChiL)$);
    
    \coordinate (BaseAleft)  at (Aj.south west);
    \coordinate (BaseAright) at (Aj.south east);
    \fill[cone] (BaseAleft) -- (BaseAright) -- (Tip) -- cycle;
    
    \coordinate (BaseIupper) at ($(NEp)+(0.00cm,-0.05cm)$);
    \coordinate (BaseIlower) at ($(ActLowR)+(0.00cm,+0.05cm)$);
    \fill[cone] (BaseIupper) -- (BaseIlower) -- (Tip) -- cycle;

    \node[
      draw, rounded corners=2pt, fill=blue!10,
      minimum width=0.95cm, minimum height=0.58cm,
      right=7mm of Ikm1, align=center
    ] (Bot) {$\bot$};
    
    \draw[arr] (Ak.south) -- (Bot.north);
    \draw[arr] (Ikm1.east) -- (Bot.west);
    
    \node[font=\tiny, below=1mm of Bot]
    {$I_{k-1}\wedge A_k \vdash \bot$};

    \fill[cone]
      (Ikm1.north east) --
      (Ikm1.south east) --
      (Bot.west) --
      cycle;
    
    \end{tikzpicture}

    \caption{Structure of a proofdoor. }
\label{fig:proofdoor-structure}
\end{figure}


We now present our main result, which connects small proofdoors to short proofs.

\begin{theorem}[\bf{Proofdoor Theorem}]\label{thm:proofdoor-main}
If $F$ is an unsatisfiable CNF formula that admits a proofdoor with 
parameters $(c,w,s,k)$, where $c$ is the maximum number of clauses in 
each interpolant, $w$ is the maximum pathwidth of each chunk, $s$ is the 
maximum backward dependency of each interpolant clause on the previous 
interpolant, and $k$ is the number of chunks, then $F$ has a resolution 
refutation of size $O(n3^{w+s}kc)$. Furthermore, this refutation respects the proofdoor's cutting partial order.
\end{theorem}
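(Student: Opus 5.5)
The plan is to build the refutation chunk by chunk, following Figure~\ref{fig:proofdoor-structure}. Every clause $C \in I_j$ will be derived from $A_j$ together with its backward dependency $S(C) \subseteq I_{j-1}$, and at the end $\bot$ will be derived from $I_{k-1} \wedge A_k$. All these local derivations are then glued into one DAG. The only tool needed is a standard fact about bounded-pathwidth formulas, which I would state and prove first as a lemma.

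\emph{Lemma.} Let $G$ be a CNF whose clause-variable incidence graph has a path decomposition of width $w'$, and let $G \models C$. Then $C$ has a resolution derivation from $G$ of size $O(|\vars(G)|\cdot 3^{w'})$. Moreover, the variables can be eliminated in the order in which they leave the path decomposition.

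To prove it, I would run Davis--Putnam variable elimination (equivalently, a bag-by-bag dynamic program) along the decomposition. For each bag we keep every clause over the variables of that bag that is implied by the clauses introduced so far. There are at most $3^{w'+1}$ such clauses, since each variable appears positively, negatively, or not at all. When a variable $x$ leaves the decomposition, all implied clauses over the remaining bag variables can be obtained by resolving on $x$. The key invariant is standard and I would only check it: after eliminating $x$, the retained clauses imply everything the processed prefix implies about the variables still alive. For the target clause $C$, I would first restrict by $\neg C$, which does not increase pathwidth, derive $\bot$, and then weaken back to a subclause of $C$ by reintroducing the literals. Resolution without weakening derives a subclause of $C$, which is sufficient for use downstream.

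I would apply the lemma with $G = A_j \wedge S(C)$. Adding the $\le s$ clauses of $S(C)$ can be accommodated by placing each of them as a clause vertex in every bag, and placing its variables there as well, or more carefully by treating $S(C)$ as contributing $O(s)$ to the width. This gives width $w+O(s)$, and hence cost $O(n 3^{w+s})$ per clause. This accounting is the step I expect to be the main obstacle. A clause of $S(C)$ may contain many variables of $S_{j-1}$, so naively putting its variables into all bags inflates the width by far more than $s$. What I would try first is to exploit the fact that only the clause vertices themselves must be added to every bag, contributing $s$ clause vertices. The variables of a clause in $S(C)$ can then be handled one at a time: they are resolved against the persistent clause vertex, which is precisely how the incidence graph, as opposed to the primal graph, keeps the width at $w+s$. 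The base $3$ in the bound is consistent with counting clause-vertex states in this way.

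For $j=1$ the derivations use $A_1$ alone. The final step is $I_{k-1} \wedge A_k \models \bot$, and it has cost $O(n3^{w+s})$ because $|I_{k-1}| \le s$. The total number of target clauses is at most $kc$, so the size is $O(n3^{w+s}kc)$. For the partial order, note that the derivation of $C \in I_j$ resolves only variables of $\vars(A_j) \cup S_{j-1}$ that are not in $S_j$. Every such variable has its last appearance at or before chunk $j$, so none of them lies in $R_i$ for $i<j$. Consequently, along any leaf-to-root path, variables are resolved in nondecreasing chunk index: variables of $L_i$ are resolved only within stage $\le i$, and variables of $R_i$ only in stages $>i$. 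This gives $L_i \prec R_i$.
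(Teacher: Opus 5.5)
Your size argument follows the paper's proof step for step. You restrict by the assignment falsifying $C$, and add the at most $s$ clause vertices of $S(C)$ to every bag, with fresh bags for variables of $S(C)$ not in $A_j$, giving incidence pathwidth at most $w+s$. You then apply an $O(n3^{w+s})$ bound for bounded-pathwidth formulas, lift back to (a subclause of) $C$, and close with $I_{k-1}\wedge A_k$.

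The paper simply cites Imanishi~\cite{imanishi2017upper} for the bounded-pathwidth bound. If you reprove it, the dynamic-programming family cannot be ``clauses over the bag variables'', because in the incidence graph an input clause need not lie inside one bag. It must instead consist of literals on bag variables together with the not-yet-introduced part of some subset of the bag's clause vertices. That is where the count $3^a2^b\le 3^{w'+1}$ you allude to comes from.

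Your partial-order paragraph needs repair; the paper does not argue this part at all. It is false that the derivation of $C\in I_j$ never resolves on variables of $S_j$. Suppose $A_j$ consists of $(\neg a\vee y)$ and $(\neg y\vee b)$ with $a\in S_{j-1}$ and $y,b\in S_j$. Then deriving $C=(b)$ from $S(C)=\{(a)\}$ must resolve on $y$. Your next inference also fails, since a variable first appearing in $A_j$ does lie in $R_i$ for every $i<j$.

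The conclusion still holds for a different reason. Every variable touched at stage $j$ (including the final stage $k$) lies in $\mathrm{vars}(A_j)\cup S_{j-1}$. So it first appears in a chunk $\le j$ and last appears in a chunk $\ge j$. Hence $L_i$-variables are resolved only at stages $\le i$, and $R_i$-variables only at stages $>i$. Stage indices are nondecreasing along every leaf-to-root path, because stage $j$ uses only clauses of $A_j$ and outputs of stage $j-1$. Therefore $L_i\prec R_i$ is respected.
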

\begin{proof}
Let $F = A_1 \land \cdots \land A_k$ be a proofdoor decomposition with interpolants $I_1, I_2, \cdots I_{k-1}$ satisying the properties of Definition \ref{def:proofdoordecomp}.  We show that each interpolant $I_j$ can be derived from $I_{j-1} \wedge A_j$
by a resolution derivation of size $O(n3^{w+s}\cdot c)$, and that combining these derivations yields a refutation of $F$ of the same asymptotic size.

Fix an index $j \in \{2,\ldots,k-1\}$. Consider an arbitrary clause
$C = (\ell_1 \vee \ell_2 \vee \cdots \vee \ell_t)$ appearing in the interpolant $I_j$.
By the definition of interpolant, we have $(I_{j-1} \wedge A_j) \models C.$
By our assumption we also have $(S(C) \wedge A_j) \models C.$ Equivalently,
if we let $\alpha$ be the partial assignment that falsifies all literals of $C$,
then the restricted formula $(S(C) \wedge A_j)\!\upharpoonright_\alpha$
is unsatisfiable.
Observe that $A_j\!\upharpoonright_\alpha$  has pathwidth at most $w$ with respect to the same path decomposition under restriction.
Moreover, we need only $s$ clauses from the previous interpolant to derive $C$.
So, the restriction $S(C)\!\upharpoonright_\alpha$ contributes at most $s$ clauses. Now, the formula $(S(C) \wedge A_j)\!\upharpoonright_\alpha$ is also a bounded pathwidth formula, as one can create a new path decomposition of
$(S(C) \wedge A_j)\!\upharpoonright_\alpha$ by creating new bags for all the variables
not in $A_j$ but in $S(C)$ after the path decomposition of $A_j$. And we add all the
clauses in $S(C)$ to all the bags, but as the number of clauses is just $s$ this will increase the pathwidth by maximum $s$.
Reversing the restriction $\alpha$ yields a resolution derivation of the clause $C$
from $I_{j-1} \wedge A_j$ of the same size. Since
$(S(C) \wedge A_j)\!\upharpoonright_\alpha$ has pathwidth at most $w$, it has a
resolution refutation of size $O(n3^{w+s})$~\cite{imanishi2017upper}. Thus each clause $C \in I_j$ can be
derived from $I_{j-1} \wedge A_j$ by a resolution derivation of size $O(n3^{w+s})$.
Since $I_j$ contains at most $c$ clauses, the entire interpolant $I_j$ can be derived
from $I_{j-1} \wedge A_j$ by a resolution derivation of size $O(n3^{w+s}\cdot c)$. For $j=1$, we have $A_1 \models C$ directly, and the same pathwidth
argument gives each $C\in I_1$ in size $O(n3^{w})$; the case $j\ge 2$
follows. Combining these derivations for $j = 1,\ldots,k-1$ yields a resolution derivation of
$I_{k-1}$ of size $O((k)\cdot n3^{w+s}\cdot c)$. Continuing with $I_{k-1}$ and $A_k$,
we can derive $\bot$, since the last interpolant has at most $s$ clauses so
$I_{k-1} \land A_k$ has pathwidth at most $w+s$. This final step yields a refutation of
size $O(n3^{w+s})$. Therefore the total proof size is $O(n3^{w+s}kc)$.
\end{proof}

\begin{corollary}
    [\bf{Small Proofdoor Corollary}]\label{thm:smallpfd}
    If $F$ admits small proofdoors, then $F$ has a resolution refutation of polynomial size in $n$.
\end{corollary}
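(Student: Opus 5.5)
The corollary should follow by substituting the small-proofdoor parameters of Definition~\ref{def:smallpfd} into the bound of Theorem~\ref{thm:proofdoor-main}. No new resolution argument is needed.

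Assume $F$ admits a proofdoor with parameters $(c,w,s,k)$ where $c = O(n)$, $w \le a\log_2 n$, $s \le b\log_2 n$ and $k = O(n)$ for constants $a,b$ and all sufficiently large $n$. Theorem~\ref{thm:proofdoor-main} gives a resolution refutation of size $O(n\,3^{w+s}\,k\,c)$. It remains to check that each factor is polynomial in $n$.

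The factors $n$, $k$ and $c$ together contribute $O(n^3)$. For the exponential factor,
\[
3^{w+s} \le 3^{(a+b)\log_2 n} = n^{(a+b)\log_2 3},
\]
which is polynomial because $a$ and $b$ are constants. Multiplying the factors bounds the refutation size by $O\bigl(n^{3+(a+b)\log_2 3}\bigr)$, a polynomial in $n$.

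The only step needing care is that the constants hidden in $O(\log n)$ are fixed independently of $n$, so that the exponent $(a+b)\log_2 3$ is a constant. This is what ``admits small proofdoors'' means when read for a family of formulas, and I would state this reading explicitly. The $O(n)$ factor inside Theorem~\ref{thm:proofdoor-main}, coming from the bounded-pathwidth refutation size, is already in terms of the number of variables of $F$, so no further adjustment is required. The partial-order guarantee of Theorem~\ref{thm:proofdoor-main} carries over unchanged, though the corollary does not need it.
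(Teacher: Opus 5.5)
Your proposal is correct and matches the paper's implicit argument: substitute $c=O(n)$, $k=O(n)$ and $w,s=O(\log n)$ into the $O(n3^{w+s}kc)$ bound of Theorem~\ref{thm:proofdoor-main}, and use $3^{O(\log n)}=n^{O(1)}$. The paper leaves unstated that the constants hidden in $O(\log n)$ must be uniform over the formula family, so your explicit remark on this is a reasonable clarification.
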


Next, we connect Corollary~\ref{thm:smallpfd} to CDCL. A direct implication of
Atserias, Fichte, and Thurley~\cite{atserias2011clause} framework is that CDCL
with nondeterministic branching and nondeterministic value selection
$p$-simulates general resolution. Combined with Corollary~\ref{thm:smallpfd},
which gives every small-proofdoor formula a resolution refutation of size
polynomial in $n$, this shows that CDCL with nondeterministic branching,
nondeterministic value selection, \textsc{Decision} learning, restart after each
conflict, no clause deletion, and naive BCP refutes every formula admitting small
proofdoors in a polynomial number of conflicts, and hence in polynomial time.

\begin{theorem}[\bf{{CDCL-Proofdoor Theorem}}]
\label{thm:cdcl-proofdoors}
Suppose $F$ admits small proofdoors  via the
partition $F = A_1 \wedge \cdots \wedge A_k$
and interpolants $(I_1,\dots,I_{k-1})$
as in Theorem~\ref{thm:proofdoor-main}. Then there exists a CDCL run
(with non-deterministic value selection, non-deterministic branching,
DECISION learning, restart after each conflict,
no clause deletion, and naive BCP)
that refutes $F$ with at most $\poly(n)$ conflicts.
\end{theorem}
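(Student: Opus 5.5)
The plan is to compose Corollary~\ref{thm:smallpfd} with the simulation result of Atserias, Fichte, and Thurley~\cite{atserias2011clause}. First, instantiate Theorem~\ref{thm:proofdoor-main} with the small-proofdoor parameters $c=O(n)$, $w=O(\log n)$, $s=O(\log n)$, $k=O(n)$. This gives a resolution refutation $\pi$ of $F$ of size
\[
O\bigl(n\,3^{w+s}\,k\,c\bigr) \;=\; O\bigl(n\cdot n^{O(1)}\cdot n\cdot n\bigr) \;=\; \poly(n),
\]
since $3^{O(\log n)} = n^{O(1)}$. The refutation is built chunk by chunk. For each $j$, we derive the clauses of $I_j$ from $I_{j-1}\wedge A_j$ by bounded-pathwidth refutations of restricted formulas. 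We then finish with a refutation of $I_{k-1}\wedge A_k$. I would make explicit that $\pi$ has length $L=\poly(n)$ and width at most $n$.

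Second, invoke the simulation theorem of~\cite{atserias2011clause}. It says that CDCL with nondeterministic branching and value selection, \textsc{Decision} learning, restarts after every conflict, no clause deletion, and naive BCP can ``absorb'' the clauses of any resolution refutation in order. For each clause $D_i$ of $\pi$, once all its premises have been absorbed (or learned), the solver branches on the negations of the literals of $D_i$ in some order. Each premise then either becomes unit-propagating or already conflicts. This produces a conflict whose \textsc{Decision}-learned clause is a subclause of $D_i$, or shows that $D_i$ is already absorbed. The number of conflicts per clause is at most $O(n)$, since each round either makes $D_i$ absorbed or learns a new clause that strictly increases the set of absorbed clauses along the way. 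The total number of conflicts is therefore $O(n\cdot L)$, and once $\bot$ (equivalently, the empty clause) is absorbed, the solver reports unsatisfiability at decision level $0$.

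Third, combining these gives at most $O(n^2\cdot 3^{w+s} kc)=\poly(n)$ conflicts. Because each conflict in this configuration is followed by a restart and naive BCP runs in polynomial time over a polynomially bounded clause database, the run also takes polynomial time.

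The main obstacle is the second step. I must check that the precise CDCL configuration named in the statement matches the hypotheses of the Atserias--Fichte--Thurley absorption argument, and that their bound is polynomial in the proof length and $n$. In particular, their analysis needs the nonredundant absorption lemma: if $D_i$ is a resolvent of absorbed clauses and is not yet absorbed, then branching against $D_i$ yields a conflict whose decision clause is new. I would cite this directly rather than reprove it, noting only that ``no clause deletion'' preserves absorption across restarts.
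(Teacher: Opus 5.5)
Your proposal is correct and is exactly the paper's argument: instantiate Theorem~\ref{thm:proofdoor-main} with the small-proofdoor parameters (so $3^{w+s}=n^{O(1)}$) to get Corollary~\ref{thm:smallpfd}, then invoke the Atserias--Fichte--Thurley simulation of resolution by CDCL in this nondeterministic configuration. The only loose spot is your justification of the per-clause bound, since ``the set of absorbed clauses grows'' is not itself a progress measure; the clean reason is that you branch on the negations of $D_i\setminus\{\ell\}$ for a non-absorbed literal $\ell$ and then on $\neg\ell$, which forces a conflict whose \textsc{Decision}-learned clause is a subclause of $D_i$ and already makes $D_i$ absorbed, so at most one conflict per proof clause is needed (well within your $O(n)$).
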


\section{Commutativity of Floating-Point Addition}
\label{sec:fp} 

We now consider the \emph{commutativity of floating-point addition}. We model a floating-point representation inspired from IEEE~754~\cite{ieee754}. A normalized floating-point number is represented as $
x = (-1)^S \cdot (1.F) \cdot 2^{E}$, 
where $S$ is the sign, $E$ the unbiased exponent, and $F$ the fraction bits.  We consider positive normalized inputs only, and use round-to-nearest, ties-to-even. 
We do not model NaNs, infinities, subnormals, or signed zeros. 
Given two inputs $a=(E_a,M_a)$ and $b=(E_b,M_b)$ where $E_a,E_b$ are $m$ bits and  $M_a,M_b$ are $n$ bits, we encode a simplified floating-point addition circuit  inspired by the standard pipeline, consisting of the following stages:

\begin{enumerate}
    \item {Exponent Comparison:} Determine which operand has the larger exponent. 
    A comparator determines whether $E_a > E_b$. It outputs ordering signals $GT$ and $LT$ and selects the larger and smaller exponents using a multiplexer $E_{\text{large}}, E_{\text{small}}$.

    \item {Mantissa Alignment:} Right-shift the smaller mantissa by the exponent difference. This introduces Guard (G), Round (R), and Sticky (S) bits. 
    The exponent difference $\text{Diff} = E_{\text{large}} - E_{\text{small}}$ determines the right shift of the smaller mantissa. The outputs at this stage are the aligned significand $M'_{\text{small}}$ together with guard, round and sticky bits $G, R, S$.

    \item {Significand Addition:} Perform significand addition in extended precision (includes $G,R,S$). 
    The aligned significands are added, producing a sum $\Sigma$ and carry-out $c$.

    \item {Normalization:} If the result overflows the leading 1-position, adjust the exponent and shift the significand accordingly. 
    If $c=1$ then the sum is shifted right by one position and the exponent is incremented. The outputs are the normalized significand $\Sigma_{\text{norm}}$, the updated $G,R,S$ bits $G_{\text{norm}}, R_{\text{norm}}, S_{\text{norm}}$, and the updated exponent $E_{\text{out}}$.

    \item {Rounding:} Use $G,R,S$ bits (round-to-nearest, ties-to-even) to produce the final representable mantissa. 
    A rounding increment $\text{inc}$ is computed from $R_{\text{norm}}, S_{\text{norm}}$, and the least significant bit of $\Sigma_{\text{norm}}$. The final outputs are $\Sigma_{\text{final}}$ and $E_{\text{final}}$.
\end{enumerate}

The complete CNF encoding has size $O(nm+n+m)$.
Detailed gate-level encodings are given in the appendix \ref{app:floap}.

\begin{theorem}[{\bf FP Small Proofdoor Theorem}]
\label{thm:smallpfdfloat}
The CNF encoding of the commutativity of floating-point addition
admits small proofdoors. In particular, there exists a proofdoor
decomposition with parameters $(c,w,s,k) = (O(m+n),\, O(1),\, O(1),\, O(m+n)),$ where $n$ denotes the number of mantissa bits and $m$ denotes the
number of exponent bits.
\end{theorem}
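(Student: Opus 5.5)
The miter formula consists of two copies of the adder circuit, one computing $a+b$ and one computing $b+a$, together with clauses asserting that the outputs $(\Sigma_{\text{final}},E_{\text{final}})$ and $(\Sigma'_{\text{final}},E'_{\text{final}})$ differ. The plan is to build the decomposition by walking through the pipeline stage by stage in \emph{both} copies simultaneously. Inside each stage we proceed bit by bit. Each chunk $A_i$ holds the gates of one bit position of one stage from both copies. The interpolant after the chunk records that the corresponding internal signals of the two copies agree, with the roles of $a$ and $b$ swapped, together with a constant number of carry or borrow signals.

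\textbf{Exponent comparison.} We scan the exponent bits from most significant to least. The comparator for $E_a$ versus $E_b$ in one copy is the comparator for $E_b$ versus $E_a$ in the other. The interpolant clauses therefore state that the prefix-equality flag of copy 1 equals the prefix-equality flag of copy 2, and that the $GT$-so-far of copy 1 equals the $LT$-so-far of copy 2. Once $GT,LT$ are fixed, the multiplexers give $E_{\text{large}}=E'_{\text{large}}$ and $E_{\text{small}}=E'_{\text{small}}$ bitwise. The same holds for the selected mantissas, one chunk per bit. These are $O(1)$ clauses per bit, and each new equality clause depends on $O(1)$ previous clauses: the current select signals and the current bit equalities.

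\textbf{Alignment, addition, normalization, rounding.} Given $\text{Diff}=\text{Diff}'$ (derived bitwise through the subtractor with a borrow-equality invariant) and $M_{\text{small}}=M'_{\text{small}}$, the barrel shifter is processed layer by layer. Each shifter layer is controlled by a single Diff bit, so each output bit of a layer depends on $O(1)$ input bits of the two copies plus one control bit. The equality of outputs thus follows from $O(1)$ previous clauses, and the sticky OR-chain is handled the same way. The significand adder is a ripple-carry adder: chunk $i$ contains the full adders at position $i$ of both copies, and the interpolant carries $\Sigma_{\le i}=\Sigma'_{\le i}$ plus $c_i=c'_i$. Normalization is a one-position mux on $c$, and rounding is an increment chain, so both are again bitwise with a carry invariant. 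The final chunk $A_k$ contains the disequality clauses, encoded with auxiliary difference bits $d_i$ and the clause $\bigvee d_i$. Because the diseq clause is long, I would place the OR-chain of $d_i$ incrementally inside the chunks so that $A_k$ stays narrow. With that placement, $I_{k-1}$ is $O(1)$ clauses asserting ``no difference so far'', and $I_{k-1}\wedge A_k$ is immediately contradictory.

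\textbf{Parameters and main obstacle.} Summing over stages, the total number of chunks is $O(m+n)$, counting shifter layers per bit. Each interpolant has $O(m+n)$ clauses, since it carries forward all still-needed equalities. Each chunk is a constant-size gadget, so its pathwidth is $O(1)$. By construction, each backward dependency is $O(1)$. The main obstacle is the barrel shifter. It has $\Theta(n\log n)$ or $\Theta(nm)$ gates, so a naive count of one chunk per gate gives $k$ as large as $O(nm)$. Moreover, its wiring, where bit $i$ reads from bit $i+2^\ell$, must be ordered so that every chunk sees only $O(1)$ fresh variables and still has constant pathwidth. I would first try putting each full shifter layer into the interpolant as equalities: a constant number of chunks per bit per layer, with the interpolant holding the whole current layer's equalities. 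If the $O(m+n)$ bound on $k$ fails under this count, I would group each layer's bit gadgets into a single chunk. Such a chunk is a disjoint union of constant-size gadgets and so still has constant pathwidth. The dependency per clause remains $O(1)$ because each output equality uses only two input equalities and one control equality.
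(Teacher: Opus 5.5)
Your decomposition is essentially the paper's: datapath order, bit-sliced subtractor, adder and incrementers with a carry/borrow-equality invariant, one chunk per shifter layer, and interpolants made of cross-copy equalities starting from $\mathrm{GT}\leftrightarrow\mathrm{LT}'$. However, one step is false: ``the same holds for the selected mantissas.'' When $E_a=E_b$, both comparators output $\mathrm{GT}=\mathrm{LT}=0$. A selector driven by $\mathrm{GT}$ alone breaks the tie by operand position, hence oppositely in the two copies. With the paper's multiplexers, copy~1 ($a+b$) gets $M_{\text{large}}=M_b$, $M_{\text{small}}=M_a$, while copy~2 ($b+a$) gets $M'_{\text{large}}=M_a$, $M'_{\text{small}}=M_b$. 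The prefix is satisfiable with $E_a=E_b$ and $M_a\neq M_b$. So $M_{\text{small}}=M'_{\text{small}}$, and every shifter equality you propagate from it, is not implied by the prefix and cannot be an interpolant clause. In this case unsatisfiability rests on $M_a+M_b=M_b+M_a$, which your invariant never uses. The paper's stage-$A_2$ interpolant asserts the same mantissa equalities, so its write-up has the same hole. Relatedly, even the true equality $E_{\text{large}}=E'_{\text{large}}$ needs the clauses $\mathrm{GT}\vee\mathrm{LT}\vee(E_{a,i}\leftrightarrow E_{b,i})$ in the interpolant for a bitwise derivation, because in a tie the two multiplexers read $E_{b,i}$ and $E_{a,i}$ respectively.

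A repair is to guard the mantissa invariants by the tie condition $\neg\mathrm{GT}\wedge\neg\mathrm{GT}'$. Carry the direct equalities $x_i\leftrightarrow x'_i$ of the shifter words only under $\mathrm{GT}\vee\mathrm{GT}'$. Under the tie, carry $\neg\mathit{Diff}_j$, $\neg G,\neg R,\neg S$, and swapped equalities (each copy's word equals the other copy's large mantissa). At the significand adder, use that a full-adder cell is symmetric in its two operand bits: $\Sigma_i\leftrightarrow\Sigma'_i$ and carry equality then follow in both cases from $O(1)$ clauses, and everything after is unconditional. The catch is that $\mathrm{GT},\mathrm{GT}'$ must remain shared variables across all those cuts. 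If all selection is done up front, as in your order, they disappear; the only surviving tie indicator is the $m$-ary condition $\mathit{Diff}=0$, and guarding $n$ equalities by it naively costs $\Theta(nm)$ clauses. Instead, move the $M_{\text{large}}$ multiplexers of bit $i$ into adder slice $i$. This keeps $\mathrm{GT},\mathrm{GT}',M_a,M_b$ alive, the swapped equalities become $x_i\leftrightarrow M_{a,i}$ and $x'_i\leftrightarrow M_{b,i}$, and the parameters stay $(O(m+n),O(1),O(1),O(m+n))$. Two smaller fixes are also needed. First, a shifter layer is not a disjoint union of gadgets: all gadgets share $\mathit{Diff}_\ell$, and wire $x_{i+2^\ell}$ feeds gadgets $i$ and $i+2^\ell$. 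It still has constant pathwidth if you put $\mathit{Diff}_\ell,\mathit{Diff}'_\ell$ in every bag and order the remaining bags by residue class mod $2^\ell$; ordering by bit position keeps about $2^\ell$ wires alive, as you suspected. Second, if the miter is the single clause $\bigvee_i d_i$, there is no OR-chain to distribute. Put that clause in the first output slice and carry it in the interpolants, deleting $d_i$ whenever $\neg d_i$ is derived; this is how the paper keeps $s=O(1)$ and $|I_{k-1}|=O(1)$.
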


\begin{proof}
    
We construct the usual miter between $L = \mathrm{add}(a,b)$ and $R = \mathrm{add}(b,a)$
and assert that some output bit will differ. We partition the miter encoding into consecutive chunks following the datapath of a standard floating-point adder:
(i) exponent comparison together with exponent/mantissa selection,
(ii) exponent difference,
(iii) mantissa alignment (barrel shifter with GRS logic),
(iv) significand addition,
(v) normalization and exponent update,
(vi) rounding and the final increment/shift.
Each stage is further refined into constant-fan-in slices (bit-cells for add/sub/increment, and multiplexer layers for the shifter), yielding a clause partition
$F = A_1 \wedge \cdots \wedge A_K$.

The interpolants $I_j$ record equalities between the two circuit copies on the wires that feed the next chunk.
At the first stage, symmetry of the comparator under swapping inputs yields mirrored ordering signals satisfying
$\text{EQ}_L = \text{EQ}_R,\ \text{GT}_L = \text{LT}_R,\ \text{LT}_L = \text{GT}_R$.
From these equalities, the exponent-selection multiplexers derive identical
$E_{\text{large}}$ and $E_{\text{small}}$ on both sides, and hence identical exponent difference $\text{Diff}$.
Thus the next interpolant consists of equalities for
$E_{\text{large}}, E_{\text{small}}, \text{Diff}$.

In the alignment stage, equality of $\text{Diff}$ and of the selected mantissas entails equality of the aligned outputs:
$M'_{\text{small},L} = M'_{\text{small},R},\ 
G_L = G_R,\ 
R_L = R_R,\ 
S_L = S_R$.
For arithmetic blocks such as the subtractor and ripple-carry adders, we slice bit-by-bit:
the equality of the $(i+1)$st carry/borrow and the equality of the $i$-th sum/difference bit are entailed by the local CNF of the $i$-th cell together with the equalities of the two input bits and the incoming carry/borrow.
Concretely, $\Sigma_{i,L} = \Sigma_{i,R}$ and $c_{i+1,L} = c_{i+1,R}$
follow from equality of $c_{i,L}$ and the two operand bits.
Thus every newly derived equality clause has a support set $S(C)\subseteq I_{j-1}$ of constant size.
The same holds for the barrel shifter when decomposed by stages:
at each layer, output-wire equality follows from the MUX constraints together with the equality of the select bit and the equalities of the two input wires from the previous layer.

Normalization depends only on the carry-out; since $c_L=c_R$, we obtain equalities for
$\Sigma_{\text{norm}}$, the updated $G_{\text{norm}},R_{\text{norm}},S_{\text{norm}}$, and $E_{\text{out}}$.
Rounding depends only on these wires and the least significant bit of $\Sigma_{\text{norm}}$, yielding equality of
$\Sigma_{\text{final}}$ and $E_{\text{final}}$.
These equalities form the final stage interpolant.

To enforce inequivalence in the miter we introduce error variables
$e_i^M$ and $e_i^E$ encoding disagreements of mantissa and exponent
output bits, and include the clause $\Bigl(\bigvee_{i=0}^{n-1} e_i^M\Bigr)
\;\vee\;
\Bigl(\bigvee_{i=0}^{m-1} e_i^E\Bigr)$. In the final stage the circuit computes $\Sigma_{\text{final}}$ and
$E_{\text{final}}$.
As each output-bit equality
$\Sigma_{\text{final},L}[i]=\Sigma_{\text{final},R}[i]$ or
$E_{\text{final},L}[i]=E_{\text{final},R}[i]$
is derived, the corresponding literal $e_i^M$ or $e_i^E$
is removed from this clause.
Thus the proofdoor carries a single disagreement clause whose length
decreases as output-bit equalities are established, and the final
interpolant consists of a constant number of clauses. Each chunk has bounded clause variable incidence pathwidth (see Appendix \ref{app:floap}).
All conditions of Definition~\ref{def:smallpfd} are satisfied.

\end{proof}

\section{Lower Bounds on Interpolants, Proofdoors, and Partially Ordered Resolution}

Given a proofdoor decomposition, the refutation constructed in
Theorem~\ref{thm:proofdoor-main} derives every clause of each interpolant
$I_1,\dots,I_{k-1}$ in sequence, and each $I_i$ is an interpolant for the split 
$A_1\wedge\cdots\wedge A_i$ versus $A_{i+1}\wedge\cdots\wedge A_k$.
Consequently, a lower bound on the size of \emph{every} interpolant CNF
across a single split will lower-bound the parameter $c$ of every
proofdoor decomposition over that chunking, and hence the size of the
refutations produced by Theorem~\ref{thm:proofdoor-main}, even when a
different decomposition of the same formula admits small proofdoors. In this
section we prove such interpolant size lower bounds for a large class of arithmetic
miter formulas and decompositions, and conclude that proofdoor decompositions containing such cuts yield exponentially large refutations (Theorem~\ref{thm:proofdoor_lb}).

We also show that interpolant lower bounds imply lower bounds for
partially ordered resolution (Lemma~\ref{lem:interpolants}). However, these resolution lower bounds apply to partial orders that are more restrictive than the cutting partial orders given by Definition~\ref{def:proofdoordecomp}, so they do
not apply to the refutations of Theorem~\ref{thm:proofdoor-main}. They may
nevertheless be of independent interest. Whereas feasible
interpolation for unrestricted resolution yields interpolants only as
Boolean circuits~\cite{Krajicek97,Pudlak97}, so that lower bounds
obtained this way are conditional on circuit lower bounds,
Lemma~\ref{lem:interpolants} extracts interpolants in CNF form, where
unconditional exponential bounds are comparatively abundant. As one instance, we use this method in
Theorem~\ref{thm:func-enc} to generalize a partially ordered resolution
lower bound of Janota~\cite{janota2016exponential}.

\subsection{Lower bounds on arithmetic miter interpolants}

A \emph{miter circuit} checks equivalence of two Boolean circuits by computing their outputs on the same inputs and asserting that they differ. Such circuits are standard in equivalence checking and other formal-verification workflows. We will prove size lower bounds for interpolants corresponding to a natural class of arithmetic miter decompositions.

\begin{definition}
A \emph{tree-like arithmetic circuit} is defined as a binary tree graph $T$ whose internal nodes are labeled by $+$ or $\times$, and leaf nodes representing input bit-vectors $\mathbf{x}_0,\mathbf{x}_1,\ldots$. Let $T(n)$ denote the Boolean circuit obtained by instantiating each leaf as an \(n\)-bit input, and replacing the $+$ and $\times$ nodes of $T$ with sufficiently large adder and multiplier circuits such that no overflow occurs. Let $\mathbf{out}_T$ denote the \emph{output} bit-vector of $T(n)$, corresponding to the root node of the underlying graph $T$. Semantically, $\mathbf{out}_T$ represents an algebraic expression of the input bit-vectors $\mathbf{x_0}, \mathbf{x_1}, \ldots$ at the leaf nodes.
\end{definition}

Our arguments do not depend on the particular implementation of the adders and multipliers, only on their functional correctness.

\begin{definition}
An \emph{\(n\)-bit tree-like arithmetic miter circuit} consists of two tree-like arithmetic circuits \(T_1(n)\) and \(T_2(n)\) over the same leaf inputs, together with a Boolean circuit \(E\) that outputs \(1\) iff \(\mathbf{out}_{T_1}\neq \mathbf{out}_{T_2}\). In the instances of interest, \(T_1\) and \(T_2\) compute the same arithmetic expression, so the overall miter is unsatisfiable.
\end{definition}

\begin{figure}
    \centering
    \includegraphics[width=0.7\linewidth]{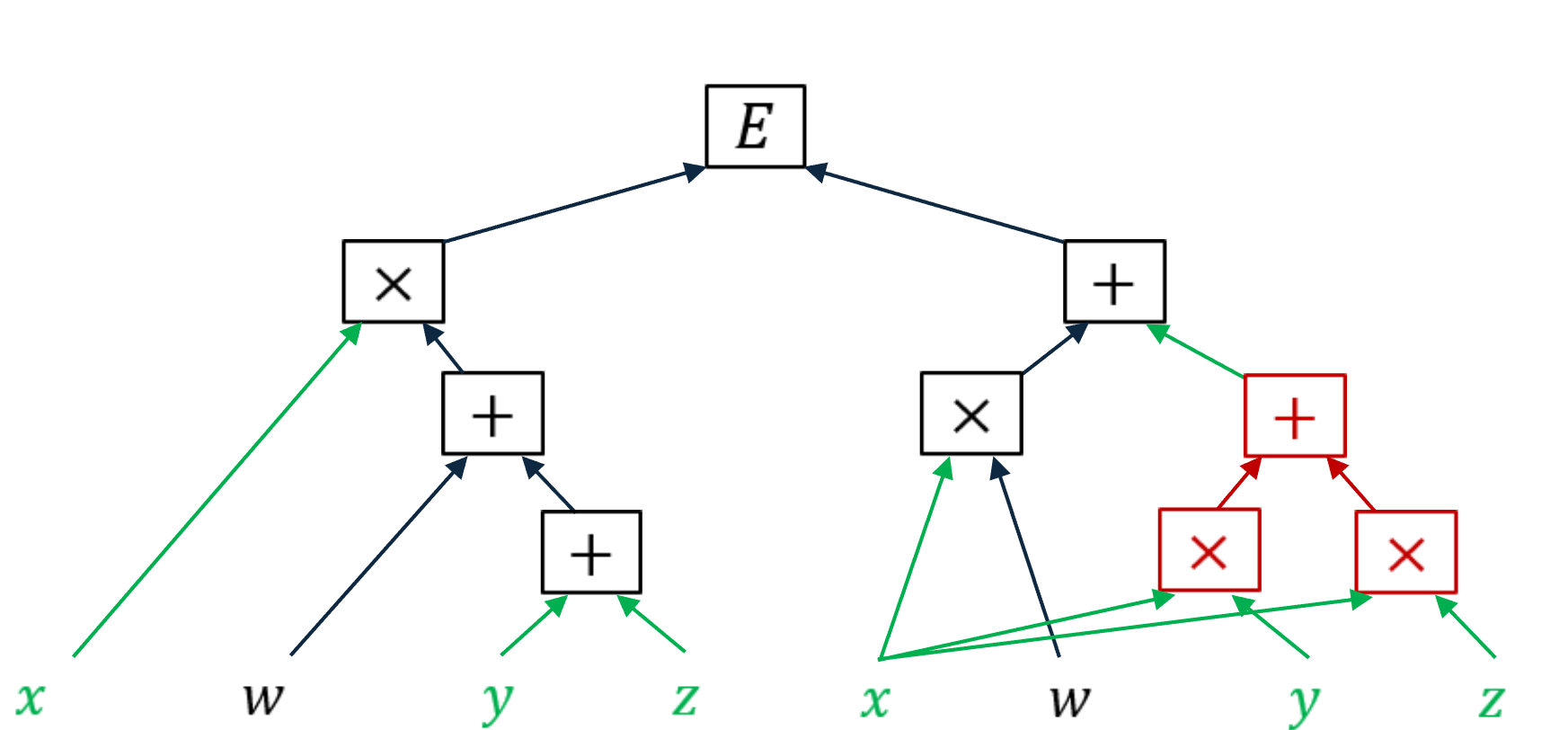}
    \caption{A tree-like arithmetic miter circuit encoding the algebraic inequality $\mathbf{x}(\mathbf{w}+\mathbf{y}+\mathbf{z}) \neq \mathbf{xw}+(\mathbf{xy}+\mathbf{xz})$. If $v$ is chosen to be the (total) $+$ node that outputs $\mathbf{xy}+\mathbf{xz}$, then the variables belonging to $\Vbefore$ are colored in black, the variables belonging to $\Vafter$ are colored in red, and the variables belonging to $\Vshared$ are colored in green.}
    \label{fig:miter}
\end{figure}

See Figure~\ref{fig:miter} for an example of a tree-like arithmetic miter circuit. Most of the miter circuits considered in the literature are tree-like. For instance,~\cite{beame2019toward} studied tree-like arithmetic miter circuits corresponding to algebraic equalities such as $\mathbf{xy} = \mathbf{yx}$ or $\mathbf{x}(\mathbf{y}+\mathbf{z}) = \mathbf{xy} + \mathbf{xz}$.

Our lower bounds will apply to interpolants corresponding to the following class of arithmetic miter decompositions.

\begin{definition}[{\bf Cut at a total node}]\label{def:total_node_cut}
Let $\Phi$ be the CNF encoding of an $n$-bit tree-like arithmetic miter
circuit, and let $v$ be an internal $+$ or $\times$ node of $T_1$ or
$T_2$. Define $\Vafter$ to be the set of circuit variables (i.e.,
excluding leaf input variables) contained in the subtree rooted at $v$;
let $\Vshared$ contain the output bit-vector of $v$ together with the
leaf input bit-vectors contained in that subtree; and let $\Vbefore$ be
the set of all remaining variables. No
clause of $\Phi$ contains both a $\Vbefore$-variable and a
$\Vafter$-variable, so $\Phi$ can be written as a split formula
\[
\Phi \;=\; A(\Vbefore,\Vshared)\ \wedge\ B(\Vafter,\Vshared),
\]
which we call the \emph{cut at $v$}.

We say that $v$ is a \emph{total node} if there exists a restriction $r$
to the leaf input variables in the subtree rooted at $v$ such that, in
the restricted circuit:
\begin{enumerate}
\item the left child of $v$ is fixed to the same value as some unassigned
$n$-bit leaf input vector $\mathbf a$;
\item the right child of $v$ is fixed to the same value as some
unassigned $n$-bit leaf input vector $\mathbf b$ that is distinct from
$\mathbf a$;
\item for every $\times$-node on the path from $v$ to the root, the
node's off-path input is not forced to $0$ by $r$.
\end{enumerate}
\end{definition}

For example, in the miter circuit corresponding to $\mathbf{xy} = \mathbf{yx}$, both of the $\times$ nodes are total nodes. For deeper circuits, we can often obtain a suitable restriction by recursively setting leaf inputs for $+$ nodes to $0$ and inputs for $\times$ nodes to $1$. For example, if $v$ is the $+$ node for the expression $\mathbf{xy}+\mathbf{xz}$, then the restriction $\mathbf{x}=1$ shows that $v$ is a total node. If $v$ is the $\times$ node for the expression $(\mathbf{x}+\mathbf{y})(\mathbf{x}+\mathbf{z})$, then the restriction $\mathbf{x}=0$ shows that $v$ is a total node. If $v$ is the $\times$ node for the expression $\mathbf{x}*\mathbf{x}$, it is \emph{not} a total node as its left and right inputs cannot have different values.

We will obtain our lower bounds by showing that interpolants over a cut at a total node $v$ contain the following Boolean functions, which have exponential CNF and DNF lower bounds.

\begin{definition}\label{def:graph_functions}
For \(n\in\mathbb{N}\), let \(\mathbf{x}=(x_0,\ldots,x_{n-1})\) and \(\mathbf{y}=(y_0,\ldots,y_{n-1})\) be \(n\)-bit inputs.
Write \(x\) and \(y\) for the integers they encode in binary. Define:
\[
\MultGraph(n)(\mathbf{x},\mathbf{y},\mathbf{z})=1 \iff x\cdot y=z,
\]
where \(\mathbf{z}=(z_0,\ldots,z_{2n-1})\) encodes a \(2n\)-bit integer \(z\).
\[
\AddGraph(n)(\mathbf{x},\mathbf{y},\mathbf{z})=1 \iff x+y=z,
\]
where \(\mathbf{z}=(z_0,\ldots,z_n)\) encodes an \((n+1)\)-bit integer \(z\).
\end{definition}

\begin{theorem}[Buss~\cite{Buss1992GraphMultCounting}]\label{thm:multgraph_lower_bound}
\MultGraph(n) requires exponentially large constant-depth circuits.
\end{theorem}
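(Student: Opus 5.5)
The plan is to reduce Buss's statement to a known constant-depth lower bound, namely that parity (or majority) requires exponential-size constant-depth circuits (Furst--Saxe--Sipser, Ajtai, Yao, H\aa stad). We show that \MultGraph(n) can compute a hard function after a projection, i.e.\ after fixing some inputs to constants and identifying others. Projections do not increase depth or size. So a subexponential constant-depth circuit for \MultGraph would give one for parity (or majority) on polynomially many bits, which is a contradiction.

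The key step is the embedding. Given bits $b_1,\dots,b_m$, we want to count them, or at least extract their parity, through a product test $x\cdot y = z$. The idea is to choose $x = \sum_i b_i 2^{i\ell}$ with widely spaced blocks of width $\ell = O(\log m)$, and $y = \sum_i 2^{i\ell}$. Then the product $xy$ contains, in a middle block (around position $(m-1)\ell$), the convolution sum $\sum_i b_i$. The spacing guarantees there are no carries between blocks. In other words, multiplication by a fixed constant implements iterated addition of the inputs.

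To turn the graph predicate into a function computation, we proceed as follows.
\begin{enumerate}
\item For each candidate value $t \in \{0,\dots,m\}$, we would like to test whether the middle block of $xy$ equals $t$.
\item Since \MultGraph\ only checks full equality $xy=z$, we instead encode the whole product $z$ as a function of the $b_i$. The other blocks of the product are partial sums $\sum_{i\le j} b_i$, which are not constants, so naive fixing fails.
\item To handle this, we use the graph predicate with $z$ also free. Parity is then $\bigvee_{z \text{ odd-middle}} [xy=z]$, an exponential OR, which is too large. So we use Buss's actual trick: choose $y$ such that only one block of the product is nonconstant. For example, take $x$ to be the number with the $b_i$ spaced apart and $y$ a suitable constant, and instead test divisibility-type or squaring identities.
\end{enumerate}

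If step 3 resists, a fallback uses the standard fact that \MultGraph\ reduces to parity via guessing. Constant-depth circuits with $\exists$ over witness bits are not allowed, however, so the more robust route is to invoke the counting-hierarchy result directly: Buss shows the graph of multiplication is not in $\mathrm{AC}^0$ by reducing the \emph{graph of majority/counting} to it through a $\mathrm{AC}^0$ reduction in which the nonmiddle blocks of $z$ are themselves computed in $\mathrm{AC}^0$ from the $b_i$ modulo a carry-free design. This works because each non-middle block is a prefix count bounded by $m$, and with block width $\ell$ we can choose $y$ sparse so that those prefix counts become individual bits.

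The main obstacle is exactly this step: making all but one block of $z$ an $\mathrm{AC}^0$-computable (ideally trivial) function of the inputs, so that the equality test isolates $\sum_i b_i$. This is what transfers the parity lower bound. Since the paper cites the theorem, in context we would simply invoke Buss's result rather than reprove it.
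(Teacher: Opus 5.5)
The paper does not prove this statement; it simply cites Buss. Your closing sentence therefore matches what the paper does. The self-contained argument you sketch, however, has a genuine gap, exactly at the step you flag.

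Your outer frame is sound: make $x$, $y$ and, for each candidate count $t$, a string $z^{(t)}$ into projections of the input bits $b$; take an OR over odd $t\le m$; and apply H{\aa}stad's parity bound. But you never construct $z^{(t)}$. Step 3's goal, a constant $y$ for which only one block of $xy$ is nonconstant, is impossible. For fixed $y\neq 0$ the map $x\mapsto xy$ is injective, so $xy$ takes $2^m$ values and cannot be constant outside a single $O(\log m)$-bit block. ``Divisibility-type or squaring identities'' does not name an alternative. The fallback rests on a false claim. In your aligned design the other blocks are prefix counts $\sum_{i\le j}b_i$, whose low-order bit is the parity of the first $j$ inputs, so for $j=\Theta(m)$ they are not $\mathrm{AC}^0$-computable. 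Making $y$ sparse in that design only stops the middle block from collecting all the $b_i$. The remark that \MultGraph\ reduces to parity ``via guessing'' also points in the wrong direction for a lower bound.

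The missing idea is to aim not for constant off-window blocks but for off-window bits that are each a single input bit. That requires that no two off-diagonal partial products collide. Index the inputs $b_i$, $0\le i<m$, and let $\ell=\lceil\log_2(m+1)\rceil$. Choose a Sidon set $a_0,\dots,a_{m-1}$, meaning all differences $a_i-a_j$ with $i\neq j$ are distinct. For example, take $a_i=2pi+(i^2\bmod p)$ for an odd prime $p\ge m$, so that $a_i=O(m^2)$. Put $T=\ell\max_i a_i$, $x=\sum_i b_i2^{\ell a_i}$ and $y=\sum_j 2^{T-\ell a_j}$. Then $xy=2^T\sum_i b_i+\sum_{i\neq j}b_i2^{T+\ell(a_i-a_j)}$.

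The diagonal terms pile up into $t=\sum_i b_i<2^\ell$ in the window $[T,T+\ell-1]$. Each off-diagonal term is a single bit at its own position, at distance at least $\ell$ from $T$, so no carries occur. Define $z^{(t)}$ to have $b_i$ at position $T+\ell(a_i-a_j)$, the binary expansion of $t$ in the window, and zeros elsewhere. This string is a projection of $b$, and $[\sum_i b_i=t]=\MultGraph(x,y,z^{(t)})$.

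Parity of $m$ bits is then an OR of $O(m)$ projections of $\MultGraph(n)$ with $n=O(m^2\log m)$. Hence depth-$d$ circuits for $\MultGraph(n)$ need size $2^{n^{\Omega(1/d)}}$.
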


Since CNFs and DNFs are depth-\(2\) circuits, Buss's lower bound (Theorem~\ref{thm:multgraph_lower_bound}) implies exponential size lower bounds for both CNFs and DNFs computing \MultGraph.

In contrast, the addition function, and by extension \AddGraph, has polynomial size depth $3$ Boolean circuits~\cite{Parberry1994CircuitComplexityNeuralNetworks}. Yet at depth $2$, the addition function requires Boolean circuits of size at least $2^{n-1}$~\cite{Parberry1994CircuitComplexityNeuralNetworks}. To our knowledge, whether \AddGraph~requires exponentially large CNFs or not is an open question (see Conjecture~\ref{conj:addgraph}).

In the next theorem we prove a direct $2^n$ DNF size lower bound on both $\AddGraph(n)$ and $\MultGraph(n)$, which implies that their negations require CNFs of size $2^n$.

\begin{theorem}
\label{thm:depth2}
Let $D$ be a DNF computing either \MultGraph(n) or \AddGraph(n). Then $D$ has at least $2^n$ terms.
\end{theorem}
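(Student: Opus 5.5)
We prove the bound by a fooling-set argument: we exhibit $2^n$ satisfying assignments such that no single term of $D$ can be satisfied by two of them. Each term of $D$ is an implicant of the function, so any term accepting two points of the fooling set would have to accept some non-satisfying assignment. This forces at least one term per fooling-set element. The key fact is that if a term $t$ is satisfied by assignments $\alpha$ and $\beta$, then $t$ is also satisfied by every assignment $\gamma$ that agrees with $\alpha$ and $\beta$ wherever they agree, with the remaining coordinates chosen freely. Indeed, every literal of $t$ is satisfied by both $\alpha$ and $\beta$, so it lies on a coordinate where they agree. It therefore suffices to find $2^n$ satisfying assignments such that for any two distinct ones, $\alpha$ and $\beta$, some mixed $\gamma$ is falsifying.

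For \AddGraph(n), we take the fooling set $\{(\mathbf{x},\mathbf{y},\mathbf{z}) : y = 2^n-1-x,\ z = 2^n-1\}$, one assignment for each of the $2^n$ values of $x$. Every point has the same $\mathbf{z}=(1,\ldots,1,0)$, and bitwise $\mathbf{y}=\neg\mathbf{x}$. Take two distinct points $\alpha=(x,\bar x,z)$ and $\beta=(x',\bar x',z)$ differing at some bit $i$. At bit $i$ the $\mathbf{x}$-coordinates differ and the $\mathbf{y}$-coordinates differ, so both are free in the mixture. We set $\gamma$ equal to $\alpha$ except that $x_i=y_i=1$. In $\alpha$ exactly one of $x_i, y_i$ is $1$, so this change adds exactly $2^i$ to $x+y$, giving $x+y = 2^n-1+2^i \neq z$. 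Hence $\gamma$ falsifies \AddGraph, and by the key fact at least $2^n$ terms are needed.

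For \MultGraph(n) the same template applies, but the choice of fooling set is the delicate step and the main obstacle. My first attempt is to fix $\mathbf{y}$ to the constant $1$ (that is, $y=1$) and take the points $(x,1,x)$, viewing $\mathbf{z}$ as $\mathbf{x}$ padded with zeros. For two points differing at bit $i$, the coordinates $x_i$ and $z_i$ both differ, and $\mathbf{y}$ is identical in both points. So we may set $\gamma$ to $\alpha$ with $z_i$ flipped and $x_i$ kept. Then $x\cdot 1 = x \neq z$, so $\gamma$ is falsifying. This gives $2^n$ pairwise-fooling satisfying points. I would double-check that the common-coordinate argument is valid when some coordinates ($\mathbf{y}$ and the high bits of $\mathbf{z}$) are constant across the family; constant coordinates simply stay fixed in every mixture, so this is harmless. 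The same simple trick would also work for \AddGraph{} (fix $y=0$), but the argument above is just as easy.

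Finally, we conclude that the negations require CNFs with at least $2^n$ clauses, by De Morgan duality: a CNF for $\neg f$ with $N$ clauses yields a DNF for $f$ with $N$ terms.
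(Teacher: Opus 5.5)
Your proof is correct, and it reaches the bound by a somewhat different route than the paper. The paper first restricts both functions to the equality subfunction $f_{\EQ_n}$ (via $y=1$ and zero high bits of $\mathbf z$ for \MultGraph, and $y=0$, $z_n=0$ for \AddGraph{}). It then proves a separate lemma that every non-contradictory term of a DNF for $f_{\EQ_n}$ must mention all $2n$ variables, since flipping an unmentioned variable would break equality; hence each term accepts one point and $2^n$ terms are needed. You instead run a fooling-set argument directly on the unrestricted function: the constant coordinates of your fooling set play the role of the restriction, and your mixed assignment $\gamma$ plays the role of the single-bit flip. For \MultGraph{} your fooling set is exactly the accepting set of the paper's restricted $f_{\EQ_n}$. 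For \AddGraph{} your set $\{(x,\bar x,2^n-1)\}$ corresponds to a different restriction, $z=2^n-1$, under which the function becomes $\mathbf y=\neg\mathbf x$, i.e.\ equality up to negating $\mathbf y$. Your formulation needs neither the fact that restricting a DNF does not increase its number of terms nor the assumption that no term is identically false. It is also unaffected by the short implicants the unrestricted functions have (e.g.\ $y=0\wedge z=0$ for \MultGraph), which is exactly why the paper must restrict before arguing that every term is a full minterm. The paper's route, in turn, is more modular: one lemma about $f_{\EQ_n}$ covers both functions and isolates equality as their common hard core.
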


\begin{proof}
Both $\MultGraph(n)$ and $\AddGraph(n)$ contain the equality subfunction
$$
f_{\EQ_n}(x,z) \;=\; \bigwedge_{i=0}^{n-1}(x_i=z_i).
$$
Indeed, for $\MultGraph(n)$, restricting $y=1$ and $z_n=\cdots=z_{2n-1}=0$ yields $x=z$ on the low $n$ bits; for $\AddGraph(n)$, restricting $y=0$ and $z_n=0$ yields the same subfunction. The theorem then follows from the following lemma. The proof uses a standard DNF lower bound argument; we include it in Appendix~\ref{app:EQ_lower_bound}.

\begin{lemma}
\label{EQ_lower_bound}
$f_{\EQ_n}(x,z)$ requires DNFs of at least $2^n$ terms.
\end{lemma}
\end{proof}

\begin{theorem}[{\bf Miter Interpolant Lower Bound}]
\label{thm:miter_lower_bound}
Let $\Phi = A \wedge B$ be the cut at a total node $v$ of an $n$-bit
tree-like arithmetic miter circuit (Definition~\ref{def:total_node_cut}).
Then every interpolant from $A$ to $B$ requires CNFs of size at least
$2^n$.
\end{theorem}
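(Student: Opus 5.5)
The plan is to show that every interpolant $I(\Vshared)$ for the cut at $v$ is pinned down exactly on a suitable restricted region. On that region it will coincide with the negation of $\AddGraph(n)$ or $\MultGraph(n)$ (on $\mathbf a,\mathbf b$ and the output of $v$). The bound then follows from Theorem~\ref{thm:depth2}, since a CNF for $\neg f$ yields, by De Morgan, a DNF for $f$ with the same number of terms. Restrictions do not increase CNF size, so it suffices to lower-bound $I\!\upharpoonright_r$.

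\textbf{Step 1 (what the interpolant must say).} Write $\mathbf o$ for the output bit-vector of $v$. The side $B(\Vafter,\Vshared)$ consists of the gates in the subtree of $v$, so it is satisfiable exactly when $\mathbf o$ equals the correct value computed from the leaf inputs in that subtree. The side $A$ contains the rest of the miter, with $\mathbf o$ treated as a free input, together with the constraint that the two outputs differ. If $\mathbf o$ is correct, $A$ is unsatisfiable, because the miter is. If $\mathbf o$ is incorrect, I want $A$ to be satisfiable.

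On any assignment to $\Vshared$ we have $A\models I$ and $I\wedge B$ unsatisfiable. Hence $I=1$ whenever $A$ is satisfiable, and $I=0$ whenever $B$ is satisfiable, i.e.\ whenever $\mathbf o$ is correct. So if $A$ is satisfiable for every incorrect $\mathbf o$, then $I$ is exactly ``$\mathbf o$ is incorrect.''

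\textbf{Step 2 (apply the total-node restriction).} Apply $r$. By conditions 1--2, the correct output value becomes $a+b$ or $a\cdot b$ in the free vectors $\mathbf a,\mathbf b$. Thus $I\!\upharpoonright_r$ equals $\neg\AddGraph(n)(\mathbf a,\mathbf b,\mathbf o)$ or $\neg\MultGraph(n)(\mathbf a,\mathbf b,\mathbf o)$, possibly with $\mathbf o$ padded by further high bits that I fix to $0$. This holds provided Step 1's satisfiability claim is true on the restricted region.

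\textbf{Step 3 (main obstacle: wrong $\mathbf o$ makes $A$ satisfiable).} Suppose $\mathbf o\neq$ the correct value. I must exhibit an assignment of the remaining variables, in particular of the other tree and of the variables of $A$, under which the two miter outputs differ. I would choose all leaves outside the subtree of $v$ so as to satisfy $r$, and leave the remaining unassigned leaves arbitrary, e.g.\ $1$ for $\times$ off-path siblings. The other tree then computes the true value $V$ of the expression. The tree containing $v$ computes the same expression with the subterm at $v$ replaced by a different value.

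Going up the path from $v$ to the root:
\begin{itemize}
\item at a $+$ node, a change in one input changes the sum;
\item at a $\times$ node, a change survives as long as the off-path factor is nonzero, which condition 3 lets me ensure by choosing those off-path inputs nonzero.
\end{itemize}
Since no overflow occurs, the circuit computes exact integers, so the output of this tree differs from $V$ and $E$ outputs $1$. Hence $A$ is satisfiable. If the two trees share leaf variables with the subtree of $v$, the argument is unaffected, since both trees read the same leaves and $v$'s output is the only thing altered.

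The delicate point is ensuring that fixing leaves for $r$ and condition 3 does not also force variables inside $A$ to contradict $r$. The resolution is that these leaves are shared variables, and their values are dictated by $r$ consistently on both sides.

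\textbf{Step 4 (conclude).} A CNF for $I$ with $N$ clauses restricts to a CNF for $\neg\AddGraph(n)$ or $\neg\MultGraph(n)$ with at most $N$ clauses. Negating gives a DNF with at most $N$ terms, so by Theorem~\ref{thm:depth2} (via Lemma~\ref{EQ_lower_bound}) we get $N\ge 2^n$.
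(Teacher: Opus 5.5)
Your plan is the paper's proof almost verbatim. You pin every interpolant to $\neg f$ on the $r$-restricted region: it must be $0$ where $B$ is satisfiable, i.e.\ where $\mathbf o$ is correct, and $1$ where $A$ is satisfiable. You then show $A$ is satisfiable for every wrong $\mathbf o$, by setting the remaining leaves to $1$ and using injectivity along the path from $v$ to the root, and finish with Theorem~\ref{thm:depth2}. Fixing the surplus high bits of $\mathbf o$ to $0$ is a detail the paper omits.

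However, Step~3 has a gap, which the paper's proof shares. It is not the consistency issue you flag. Condition~3 only says that each off-path $\times$-input is not identically $0$ after $r$. That input may depend on the free vectors $\mathbf a,\mathbf b$, whose values are dictated by $\rho$ rather than chosen by you. For example, in the miter for $\mathbf y(\mathbf x+\mathbf y)=\mathbf{yx}+\mathbf{yy}$, the node $v=\mathbf x+\mathbf y$ is total with $r$ empty, $\mathbf a=\mathbf x$, $\mathbf b=\mathbf y$. Take $\rho$ with $\mathbf x=\mathbf y=0$ and $\mathbf o=1$. The value at $v$ is wrong, yet both trees output $0$, so $A$ is unsatisfiable and $I(\rho)$ is unconstrained. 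Worse, the interpolant $I=[\,y\cdot o\neq yx+y^2\,]$ (satisfiability of $A$) is identically $0$ on the subcube $\mathbf b=\mathbf y=0$. That is exactly where the proof of Theorem~\ref{thm:depth2} looks for the equality subfunction of \AddGraph. So the identity $I\!\upharpoonright_r=\neg f$ is false in general, and the bound does not follow as written.

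The statement can be rescued by choosing the restriction more carefully. Fix $\mathbf b$, and every shared leaf vector left free by $r$ other than $\mathbf a$, to the value $1$; for $+$ this replaces the restriction $\mathbf b=0$. Set the $\Vbefore$ leaves to $1$ when satisfying $A$. Each off-path factor is then a nonzero polynomial in $a$ with nonnegative coefficients, so it can vanish only at $\mathbf a=0$. With the surplus high bits of $\mathbf o$ fixed to $0$, every interpolant is therefore forced to equal $[\,o\neq a+1\,]$ (for $+$) or $[\,o\neq a\,]$ (for $\times$). The only exceptions are points with $\mathbf a=0$ and $\mathbf o$ wrong. The full-term argument of Lemma~\ref{EQ_lower_bound} tolerates these undetermined points. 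Each $1$-point of the negation with $|\mathbf a|\ge 2$ still needs its own full term. The $n+1$ points with $|\mathbf a|\le 1$ need $n+1$ further distinct terms, which again gives $2^n$.
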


\begin{proof}
Let $\mathbf{v}_L,\mathbf{v}_R$ and $\mathbf{v}_{\mathrm{out}}$ denote the left input, right input, and output bit-vectors of the total node $v$. Since $v$ is a total node, there exists a restriction $r$ of the leaf input variables in the subtree of $v$ such that the restricted circuit constrains $\mathbf v_L=\mathbf a$ and $\mathbf v_R=\mathbf b$ for distinct leaf inputs $\mathbf a,\mathbf b\in V_{\mathrm{shared}}$, and such that this restriction does not force a $0$-input into any $\times$-gate on the path from $v$ to the root. The rest of this proof gives a lower bound for the restricted formula $\Phi \upharpoonright_{r}$.

Set the Boolean function $f(\mathbf a, \mathbf b, \mathbf{v}_{\mathrm{out}})$ to be either $\AddGraph(n)$, if $v$ is a $+$ node, or $\MultGraph(n)$, if $v$ is a $\times$ node.
Let $I$ be any interpolant function from $A(\Vbefore,\Vshared)$ to $B(\Vafter,\Vshared)$. We now show the key claim of this proof, that $I=\neg f$ under the restriction $r$, i.e. $\neg f$ is a subfunction of $I$. Once this claim has been proven, the theorem follows from the DNF lower bound on $f$ given in Theorem~\ref{thm:depth2}.

Suppose that $\rho$ is an assignment to the remaining shared variables $\Vshared = \{\mathbf a, \mathbf b, \mathbf{v}_{\mathrm{out}}\}$ such that
$\neg f(\mathbf a, \mathbf b, \mathbf{v}_{\mathrm{out}})=0$ (i.e. $\rho$ assigns $\mathbf{v}_{\mathrm{out}}$ correctly with respect to $\mathbf{a},\mathbf{b}$).
Then the clauses in $B$ are satisfied by propagating the leaf input values up the subtree rooted at $v$,
and hence $I(\rho) = 0$.

Conversely, suppose that $\rho$ is an assignment to $\Vshared$ such that
$\neg f(\mathbf a, \mathbf b, \mathbf{v}_{\mathrm{out}})=1$ (i.e. $\rho$ assigns $\mathbf{v}_{\mathrm{out}}$ incorrectly with respect to $\mathbf{a},\mathbf{b}$).
Then the clauses $A(\Vbefore,\Vshared)$ are satisfied by assigning the remaining leaf input bit-vectors in $\Vbefore$ to the value $1$, then propagating all leaf inputs and $\mathbf{v}_{\mathrm{out}}$ through the rest of the circuit corresponding to the clauses $A(\Vbefore,\Vshared)$. Because the restriction $r$ does not force a $0$-input into any $\times$-gate on the path from $v$ to the root, each $\times$ and $+$ operation on this path is injective. Therefore the sub-circuit $T_1(n)$ or $T_2(n)$ containing $v$ will output the incorrect value while the other sub-circuit outputs the correct value, and hence the values $\mathbf{out}_{T_1}$ and $\mathbf{out}_{T_2}$ disagree, satisfying the subcircuit $E$. As $A(\Vbefore,\Vshared)$ is satisfiable, $I(\rho) = 1$.
\end{proof}

With a symmetric argument, we can obtain a size lower bound for interpolants going in the reverse direction, from $B(\Vafter,\Vshared)$ to $A(\Vbefore,\Vshared)$, by using the CNF lower bound on $\MultGraph(n)$ implied by Theorem~\ref{thm:multgraph_lower_bound}.

\begin{theorem}
\label{thm:rev_ordering_lower_bound}
If $v$ is a total $\times$ node, then with the same assumptions as in Theorem~\ref{thm:miter_lower_bound}, interpolant CNFs from $B(\Vafter,\Vshared)$ to $A(\Vbefore,\Vshared)$ must be exponentially large.
\end{theorem}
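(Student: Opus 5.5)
The plan mirrors the proof of Theorem~\ref{thm:miter_lower_bound} with the roles of $A$ and $B$ swapped. An interpolant $J$ from $B(\Vafter,\Vshared)$ to $A(\Vbefore,\Vshared)$ must satisfy two conditions: $B \models J$, and $J \wedge A$ is unsatisfiable.

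First, fix the same restriction $r$ on the leaf inputs in the subtree of $v$ that witnesses totality. After applying $r$, the remaining shared variables are $\mathbf a, \mathbf b, \mathbf{v}_{\mathrm{out}}$. The key claim is that $J\upharpoonright_r = \MultGraph(n)(\mathbf a,\mathbf b,\mathbf{v}_{\mathrm{out}})$, exactly the negation of the subfunction obtained before. Both directions reuse the satisfying-assignment constructions from the proof of Theorem~\ref{thm:miter_lower_bound}.

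\begin{itemize}
\item Suppose $\rho$ assigns $\mathbf{v}_{\mathrm{out}} = \mathbf a\cdot\mathbf b$ correctly. Then $B\upharpoonright_{r,\rho}$ is satisfiable by propagating the leaf values up the subtree. Since $B \models J$, this gives $J(\rho)=1$.
\item Suppose $\rho$ assigns $\mathbf{v}_{\mathrm{out}}$ incorrectly. Then $A\upharpoonright_{r,\rho}$ is satisfiable. To see this, set the remaining leaf inputs to $1$ and propagate. The injectivity of every $+$ and $\times$ gate on the path from $v$ to the root, which is guaranteed by condition~3 of total nodes, makes $\mathbf{out}_{T_1}\neq\mathbf{out}_{T_2}$, so $E$ is satisfied. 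Because $J\wedge A$ is unsatisfiable, this forces $J(\rho)=0$.
\end{itemize}

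Hence $\MultGraph(n)$ is a subfunction of $J$, obtained by restriction.

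Next, a CNF for $J$ restricted by $r$ (simplifying satisfied clauses and removing falsified literals) is a CNF for $\MultGraph(n)$ of no larger size. Theorem~\ref{thm:multgraph_lower_bound} shows that $\MultGraph(n)$ requires exponential-size constant-depth circuits, and CNFs are depth-$2$ circuits. So $J$ has exponential CNF size. Note that Theorem~\ref{thm:depth2} cannot be used here: it bounds DNFs of $\MultGraph$, that is, CNFs of $\neg\MultGraph$, which is the wrong polarity. This is exactly why the statement requires $v$ to be a $\times$ node. The analogous CNF lower bound for $\AddGraph$ is open, since $\AddGraph$ has polynomial-size depth-$3$ circuits.

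The main obstacle is the second bullet. I must check that an incorrect $\mathbf{v}_{\mathrm{out}}$ can always be extended to a satisfying assignment of $A$. This needs two facts. First, the off-path inputs to $\times$ gates on the path are nonzero; setting the other leaves to $1$ together with condition~3 should guarantee this. Second, an incorrect value at $v$ stays distinguishable all the way to the root. This relies on the no-overflow assumption, so that the adders and multipliers compute exact integer results and are injective in the on-path argument. Both facts are already used verbatim in Theorem~\ref{thm:miter_lower_bound}, so I expect them to transfer directly.
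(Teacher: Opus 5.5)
Your proposal is correct and follows the paper's own route: the paper proves this by the ``symmetric argument'' to Theorem~\ref{thm:miter_lower_bound}, i.e.\ under the totality restriction $r$ the $B$-to-$A$ interpolant becomes $\MultGraph(n)$ (the opposite polarity), and the CNF bound then comes from Buss's constant-depth lower bound (Theorem~\ref{thm:multgraph_lower_bound}) rather than Theorem~\ref{thm:depth2}, exactly as you explain. One small point you inherit from the paper: if $\mathbf a$ or $\mathbf b$ also labels leaves outside the subtree of $v$ (e.g.\ $(\mathbf{x}\mathbf{y})\mathbf{x}$ with $v=\mathbf{x}\mathbf{y}$), then when $\rho$ sets $\mathbf a=0$ or $\mathbf b=0$ an off-path $\times$ input can vanish, and $J$ need not agree with $\MultGraph(n)$ there; this is harmless, because on those points $\MultGraph(n)$ is simply the test $\mathbf v_{\mathrm{out}}=0$, so a small CNF for $J$ would still give a polynomial-size constant-depth circuit for $\MultGraph(n)$, and Buss's bound applies.
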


We next state a conjecture that, if proven, would extend Theorem~\ref{thm:rev_ordering_lower_bound} to the case where $v$ is a total $+$ node.

\begin{conjecture}
\label{conj:addgraph}
    CNFs for \AddGraph(n) must be exponentially large.
\end{conjecture}

We now return to the connection with proofdoors described at the start
of this section. Say that a proofdoor decomposition
$\Phi = A_1 \wedge \cdots \wedge A_k$ \emph{contains the cut at $v$ with
prefix $A$} if for some index $i$ the chunks $A_1, \ldots, A_i$ consist
exactly of the clauses of $A(\Vbefore,\Vshared)$; define \emph{contains
the cut at $v$ with prefix $B$} symmetrically.

\begin{theorem}[{\bf Proofdoor Lower Bound}]\label{thm:proofdoor_lb}
Let $\Phi = A \wedge B$ be the cut
at a total node $v$ of an $n$-bit tree-like arithmetic miter circuit
(Definition~\ref{def:total_node_cut}). Every proofdoor decomposition of
$\Phi$ that contains the cut at $v$ with prefix $A$ requires an
interpolant $I_i$ with at least $2^n$ clauses; hence its parameter $c$ (Definition~\ref{def:smallpfd}) is
at least $2^n$, and the refutation constructed by
Theorem~\ref{thm:proofdoor-main} from any such decomposition has size at
least $2^n$. If $v$ is a $\times$ node, every proofdoor decomposition containing the
cut at $v$ with prefix $B$ likewise requires an exponentially large
interpolant $I_i$, hence exponential $c$ and refutation size; if $v$ is
a $+$ node, the same holds assuming Conjecture~\ref{conj:addgraph}.
\end{theorem}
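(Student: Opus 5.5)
The plan is to reduce the statement to the interpolant lower bounds already established, through the observation at the start of this section: in a proofdoor decomposition, $I_i$ is (semantically) an interpolant across the split $A_1\wedge\cdots\wedge A_i$ versus $A_{i+1}\wedge\cdots\wedge A_k$.

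First I will check this by induction on $j$, using Definition~\ref{def:proofdoordecomp}. The base case is immediate: $A_1\models I_1$. For the inductive step, $I_{j-1}\wedge A_j\models I_j$, so $A_1\wedge\cdots\wedge A_j\models I_j$. By definition, $I_j\wedge A_{j+1}\wedge\cdots\wedge A_k$ is unsatisfiable. Finally, $I_j$ is over $S_j$, which is exactly the set of shared variables of this split.

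Now suppose the decomposition contains the cut at $v$ with prefix $A$, witnessed by an index $i$. Then $A_1\wedge\cdots\wedge A_i$ is exactly the clause set of $A(\Vbefore,\Vshared)$, and the suffix is exactly $B(\Vafter,\Vshared)$.

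One subtlety needs care: $S_i$ is defined as the set of variables actually occurring on both sides, and this could a priori be smaller than $\Vshared$. That is harmless. A CNF over $S_i\subseteq \Vshared$ is still a CNF over $\Vshared$ that interpolates from $A$ to $B$. So Theorem~\ref{thm:miter_lower_bound} applies directly and gives that $I_i$, as a CNF, has size at least $2^n$.

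The next step is to pass from size to number of clauses. The proof of Theorem~\ref{thm:miter_lower_bound} really shows that $\neg f$ is a subfunction of $I$. Negating a CNF with $t$ clauses gives a DNF with $t$ terms, and restrictions do not increase the number of terms. So Theorem~\ref{thm:depth2} forces $t\ge 2^n$ clauses, not merely size $2^n$. Hence $c\ge 2^n$.

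For the refutation size, the proof of Theorem~\ref{thm:proofdoor-main} explicitly derives every clause of every $I_j$, $j\le k-1$, as a distinct line. Note that $i\le k-1$: the suffix $B$ is nonempty, since it contains the gates of the subtree at $v$. So the constructed refutation contains at least $2^n$ distinct clauses, giving size at least $2^n$. If $I_i$ had repeated clauses, they would count once, but distinctness is already built into the DNF argument.

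For the prefix-$B$ case, the same reasoning shows that $I_i$ is a CNF interpolant from $B$ to $A$. By Theorem~\ref{thm:rev_ordering_lower_bound}, when $v$ is a $\times$ node, its size is exponential. That bound rests on Theorem~\ref{thm:multgraph_lower_bound}, whose depth-$2$ case bounds CNF size for \MultGraph. For a $+$ node, I will substitute Conjecture~\ref{conj:addgraph} into the symmetric argument. The key point is that the interpolant from $B$ to $A$ agrees with $f$ under the restriction $r$, so a CNF for it restricts to a CNF for \AddGraph$(n)$.

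The main obstacle is bookkeeping rather than depth. The central point is that the sequential interpolant $I_i$ really is a two-sided interpolant for the specified cut, in the sense required by Theorem~\ref{thm:miter_lower_bound}, including the shared-variable subtlety. A second point is that the bound counts clauses, which is what $c$ measures.
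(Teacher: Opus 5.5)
Your proposal is correct and follows the paper's proof. Both identify $I_i$ as an interpolant from $A$ to $B$ over the shared variables, invoke Theorem~\ref{thm:miter_lower_bound} (respectively Theorem~\ref{thm:rev_ordering_lower_bound} and Conjecture~\ref{conj:addgraph} for prefix $B$), and observe that the refutation of Theorem~\ref{thm:proofdoor-main} derives every clause of $I_i$. Your extra bookkeeping only makes explicit what the paper leaves implicit: the induction that the sequential interpolant is a two-sided interpolant, $S_i\subseteq\Vshared$, counting clauses via the DNF term bound, and $i\le k-1$.
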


\begin{proof}
Suppose that the proofdoor decomposition contains the cut at $v$ with prefix $A$, so
$A_1 \wedge \cdots \wedge A_i = A$ and
$A_{i+1} \wedge \cdots \wedge A_k = B$ as sets of clauses. Then $I_i$ is an
interpolant from $A$ to $B$ over $\Vshared$, and by
Theorem~\ref{thm:miter_lower_bound} its CNF has at least $2^n$ clauses. So this proofdoor decomposition has parameter $c \geq 2^n$. Since the refutation of
Theorem~\ref{thm:proofdoor-main} derives every clause of $I_i$, it has
size at least $2^n$. With prefix $B$, the same argument shows $I_i$ is
an interpolant from $B$ to $A$, and the bounds follow from
Theorem~\ref{thm:rev_ordering_lower_bound} and
Conjecture~\ref{conj:addgraph}.
\end{proof}

Appendix~\ref{app:miterasthm1} gives a small proofdoor for the miter circuit $\mathbf{xy}=\mathbf{yx}$, corresponding to the polynomial size ordered resolution proof given in~\cite{beame2019toward}. Consistent with Theorem~\ref{thm:proofdoor_lb}, this small proofdoor does not contain the cut at either $\times$-node of the miter circuit. Thus, for the same formula, the choice of decomposition is the difference between the polynomial-size refutations of Corollary~\ref{thm:smallpfd} and the exponential refutations of Theorem~\ref{thm:proofdoor_lb}.

\subsection{Lower bounds for partially ordered resolution}

The following lemma yields partially ordered resolution lower bounds from interpolant CNF size lower bounds.

\begin{lemma}[{\bf Interpolants for partially ordered resolution}]
\label{lem:interpolants}
Let $\Phi$ be an unsatisfiable CNF formula written as $\Phi = A(\Vbefore,\Vshared) \wedge B(\Vafter,\Vshared)$, where $\Vbefore$ and $\Vafter$ are disjoint sets of variables. Let $\prec$ be the partial ordering $\Vbefore \prec \Vafter \cup \Vshared$.

From a $\prec$-ordered resolution refutation of $\Phi$ of size $s$, we can construct a CNF formula $C$ of size at most $s$ that is an interpolant from $A(\Vbefore,\Vshared)$ to $B(\Vafter,\Vshared)$.
\end{lemma}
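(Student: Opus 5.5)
The plan is to read the interpolant directly off the ``frontier'' of the $\prec$-ordered refutation, where it stops resolving on $\Vbefore$ variables. Let $\pi$ be a $\prec$-ordered refutation of $\Phi$ of size $s$, viewed as a DAG. First prune every clause from which the root $\bot$ is unreachable; this only shrinks $\pi$ and keeps it $\prec$-ordered. Call a node $D$ \emph{low} if every resolution step in the subproof rooted at $D$ is on a $\Vbefore$ variable (leaves are trivially low). Call $D$ a \emph{frontier} node if $D$ is low and either $D$ is the root, or some parent of $D$ resolves on a variable of $\Vafter\cup\Vshared$.

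\textbf{Step 1: frontier clauses contain no $\Vbefore$ variable.} Let $D$ be a frontier node with such a parent, and fix any path from $D$ through that parent to the root. That parent resolves on a variable of $\Vafter\cup\Vshared$, and every $\Vbefore$ variable is $\prec$-below it. So by $\prec$-orderedness no $\Vbefore$ variable is resolved on this path after $D$. A literal of $D$ can disappear along a path only by being resolved on. Hence any $\Vbefore$ literal of $D$ would survive to the root $\bot$, which is impossible. If $D$ is itself the root, it is $\bot$ and the claim is trivial.

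\textbf{Step 2: each frontier node is either a single $B$-clause or derived purely from $A$.} Note that $B$ clauses contain no $\Vbefore$ variables. A clause derived from $B$-leaves using only $\Vbefore$ resolutions therefore never contains a $\Vbefore$ variable, and so it can never be a premise of a $\Vbefore$ resolution. Consequently, if a low subproof contains a $B$-leaf, that leaf must be the whole subproof. Otherwise every leaf of the subproof is an $A$-clause. Now define $C$ to be the CNF consisting of all frontier clauses whose subproof has only $A$-leaves; clearly $|C|\le s$.

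\textbf{Step 3: $C$ is an interpolant.}
\begin{itemize}
\item Each clause of $C$ is derived soundly from $A$, so $A\models C$.
\item The clauses of $A$ contain only $\Vbefore\cup\Vshared$ variables, and resolution introduces no new variables. Together with Step 1, this shows that $C$ is over $\Vshared$.
\item For unsatisfiability of $C\wedge B$: every leaf-to-root path meets a frontier node. To see this, walk up from the leaf. Either the path reaches the root while staying low, or at some step a node's parent on the path resolves on a non-$\Vbefore$ variable. Lowness is preserved along the path as long as all resolutions are on $\Vbefore$ variables, so the last low node before that step is a frontier node.
\item Hence the part of $\pi$ above the frontier is a resolution refutation whose leaves are frontier clauses, which are clauses of $C$ or of $B$. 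So $C\wedge B$ is unsatisfiable.
\end{itemize}

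The degenerate case where $\bot$ itself is low (so $A$ alone is unsatisfiable) just puts $\bot$ into $C$, which is fine. The main subtlety I expect is Step 1, because a node may have several parents of different types. The definition of frontier via ``some parent'' together with the path argument handles this, but it relies on the pruning step, so that every node actually lies on a path to the root.
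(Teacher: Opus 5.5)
Your proof is correct and follows the paper's argument. On every leaf-to-root path that resolves some $\Vbefore$ variable, your frontier node is exactly the paper's boundary clause $C_P$ (the clause produced by the last $\Vbefore$-resolution on that path). You show it lies over $\Vshared$ by the same ordering argument and that it forms a cut of the proof DAG together with $B$. Your definition is in fact slightly more careful than the paper's: it also puts into $C$ the $A$-leaves lying on paths that resolve no $\Vbefore$ variable. The paper's cut claim silently needs these, as the example $A=\{(z)\}$, $B=\{(\neg z)\}$ shows.

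The one step you should make explicit is that lowness propagates upward along $\Vbefore$-resolutions, since the other premise could a priori hide a non-$\Vbefore$ step. This follows from $\prec$-orderedness: after pruning, every node in the subproof of a node $E$ lies on a common leaf-to-root path with $E$.
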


\begin{proof}
    Let $\pi$ be a $\prec$-ordered resolution refutation of $\Phi$ of size at most $s$. Each leaf-to-root path $P$ in $\pi$ that resolves on at least one $\Vbefore$-variable contains a ``boundary clause'' $C_P$ that is produced from the last resolution step on a $\Vbefore$-variable on the path $P$ (see Figure~\ref{fig:interpolant_thm}). The remaining clauses on the path $P$ must result from resolving on variables from $\Vafter$ or $\Vshared$. Therefore $C_P$ cannot contain variables from $\Vbefore$: if it did contain a variable $v \in \Vbefore$, this variable cannot be eliminated in any later resolution steps of the path $P$. Yet the path $P$ ends in the empty clause, which is a contradiction. The clause $C_P$ also cannot contain variables from $\Vafter$, since it was derived by resolving only on $\Vbefore$ variables, meaning it was derived only using clauses from $A(\Vbefore, \Vshared)$.
    
    Consider the set $C$ containing the boundary clauses $C_P$ for every leaf-to-root path $P$ in $\pi$. We will show that this set $C$ forms a CNF for an interpolant function $C(\Vshared)$ from $A$ to $B$. The implication $A(\Vbefore, \Vshared) \models C(\Vshared)$ follows from the soundness of the resolution rule. The unsatisfiability of $C(\Vshared) \land \neg B(\Vafter,\Vshared)$ follows from observing that the clauses $C\cup B$ form a cut in the proof DAG of $\pi$, and is therefore an unsatisfiable set of clauses.
\end{proof}

\begin{figure}
    \centering
    \includegraphics[width=0.7\linewidth]{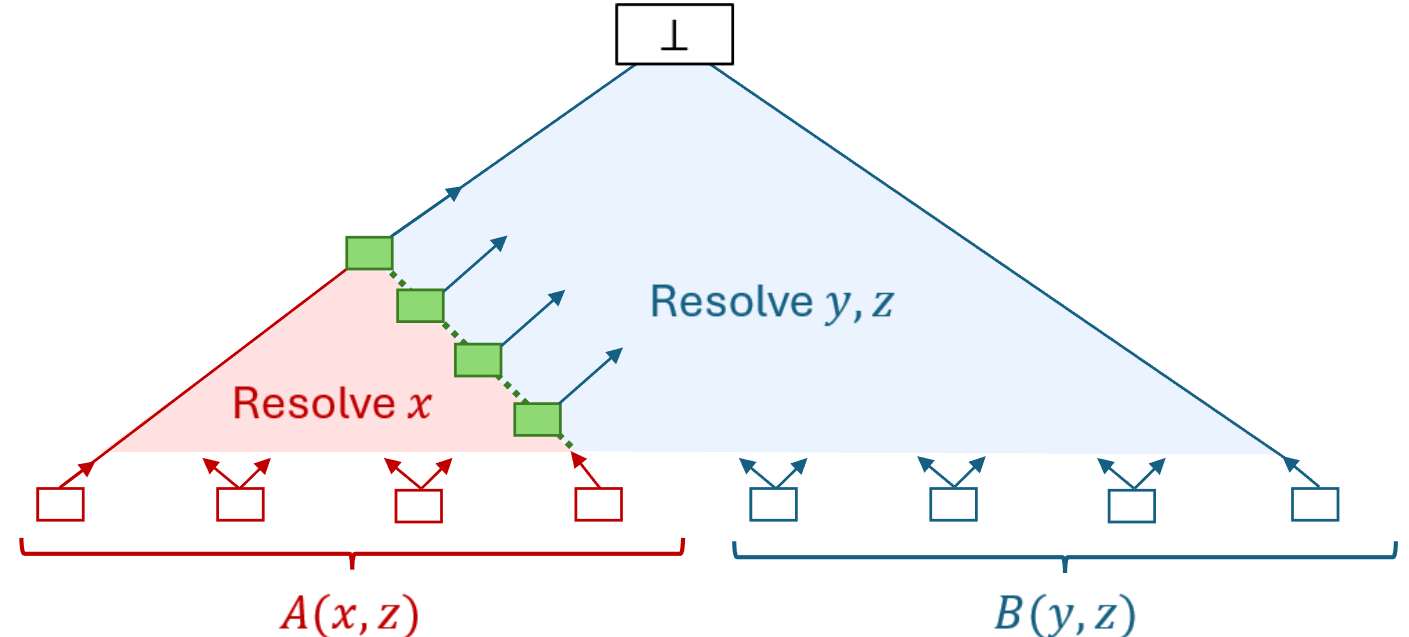}
    \caption{A partially ordered resolution proof for a formula $\Phi = A(X,Z) \wedge B(Y,Z)$. The green boxes along the dashed line are the ``boundary clauses''. These boundary clauses only contain $Z$-variables, and together they form an interpolant from $A(X,Z)$ to $B(Y,Z)$.}
    \label{fig:interpolant_thm}
\end{figure}

\begin{corollary}
\label{cor:interpolants-janota}
Any resolution refutation $\pi$ of $\Phi = A(\Vbefore,\Vshared) \wedge B(\Vafter,\Vshared)$ that respects the partial order $(\Vbefore\cup \Vafter) \prec \Vshared$ also respects the partial order $\Vbefore \prec (\Vafter \cup \Vshared)$. Hence, Lemma~\ref{lem:interpolants} also holds for the partial order $(\Vbefore\cup \Vafter) \prec \Vshared$.
\end{corollary}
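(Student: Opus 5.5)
The corollary has two parts: an order-theoretic claim and a consequence. Write $\prec_1$ for the partial order $(\Vbefore\cup\Vafter)\prec\Vshared$ and $\prec_2$ for $\Vbefore\prec(\Vafter\cup\Vshared)$. I will not compare the two orders as sets of pairs, since neither contains the other: $\prec_2$ orders $\Vbefore$ before $\Vafter$, while $\prec_1$ does not. Instead I will argue path by path in the proof DAG and use the structure of the split $\Phi = A\wedge B$.

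Fix a $\prec_1$-ordered refutation $\pi$ and a leaf-to-root path $P$. The requirement of $\prec_2$ that $\Vbefore$-variables precede $\Vshared$-variables is already implied by $\prec_1$, because $\Vbefore\subseteq\Vbefore\cup\Vafter$. So the only constraint left to check is that no $\Vafter$-variable is resolved on $P$ before a $\Vbefore$-variable. Suppose, for contradiction, that $P$ resolves on $y\in\Vafter$ at some step and later on $x\in\Vbefore$. By $\prec_1$, no $\Vshared$-variable is resolved on $P$ before that later step. Take the clause $D$ on $P$ just before the resolution on $x$, and trace downward toward the leaf. Every resolution step on this stretch of $P$ is on a $\Vbefore$- or $\Vafter$-variable. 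Among the ancestors of $D$ in the whole DAG there may be other branches, but each of them is itself part of some leaf-to-root path that passes through $D$. Hence, applying the same reasoning to every such path, the entire sub-derivation of $D$ resolves only on $\Vbefore\cup\Vafter$ variables.

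The key step, and the one I expect to be the main obstacle, is showing that such a sub-derivation never mixes $A$-clauses with $B$-clauses. A resolution of an $A$-derived clause (variables in $\Vbefore\cup\Vshared$) with a $B$-derived clause (variables in $\Vafter\cup\Vshared$) must be on a variable they share, and such a variable lies in $\Vshared$. No $\Vshared$-variable is resolved in this sub-derivation, so every clause in it is derived purely from $A$ or purely from $B$. On the other hand, $D$ was obtained from a resolution on $y\in\Vafter$, so it lies in the $B$-side and contains no $\Vbefore$-variable. It therefore cannot be resolved on $x$, which is a contradiction. The delicate point is that a path from $D$ down to a leaf includes only one of the two premises at each step, so the argument has to be phrased about all ancestors of $D$, not just the single path $P$. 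I will do this by induction on the DAG structure below $D$, using the fact that the $\prec_1$ constraint holds on every path through $D$.

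Once this is established, $\pi$ is $\prec_2$-ordered. The second sentence of the corollary then follows by applying Lemma~\ref{lem:interpolants} directly to $\pi$.
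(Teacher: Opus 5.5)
Your proposal is correct and follows the same route as the paper. Both reduce the claim to showing that no path resolves a $\Vafter$-variable before a $\Vbefore$-variable, and both get the contradiction from the fact that, with no $\Vshared$-resolution, $A$-side and $B$-side clauses never mix.

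Your version is more careful than the paper's one-line disjointness argument. You apply the $(\Vbefore\cup\Vafter)\prec\Vshared$ constraint to every path through $D$, which rules out a $\Vshared$-resolution on a side branch of $D$'s sub-derivation; the paper leaves this step implicit. One small fix: $D$ need not be the immediate resolvent of the $y$-step. It is still $B$-pure by your dichotomy, because one of its ancestors contains $y\in\Vafter$ and so rules out $A$-purity.
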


\begin{proof}
It suffices to show that in the resolution refutation $\pi$, no leaf-to-root path $P$ resolves on a $\Vafter$ variable before an $\Vbefore$ variable. This is true because the set of clauses containing $\Vbefore$ variables is disjoint from the set of clauses containing $\Vafter$ variables. So a leaf-to-root path $P$ that has resolved on a $\Vafter$ variable cannot later resolve on an $\Vbefore$ variable before it has resolved on a shared $\Vshared$ variable.
\end{proof}

\begin{theorem}\label{thm:miter_po_res}
Let $\Phi = A(\Vbefore,\Vshared) \wedge B(\Vafter,\Vshared)$ be the cut at
a total node $v$ of an $n$-bit tree-like arithmetic miter circuit
(Definition~\ref{def:total_node_cut}). Then any resolution refutation of
$\Phi$ respecting the partial order $\Vbefore \prec (\Vafter \cup \Vshared)$
has size at least $2^n$. If $v$ is a $\times$ node, any refutation
respecting the partial order $\Vafter \prec (\Vbefore \cup \Vshared)$ is
exponentially large; if $v$ is a $+$ node, the same holds assuming
Conjecture~\ref{conj:addgraph}.
\end{theorem}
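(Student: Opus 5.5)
The plan is to combine Lemma~\ref{lem:interpolants} with the interpolant lower bounds of Theorems~\ref{thm:miter_lower_bound} and~\ref{thm:rev_ordering_lower_bound}. Take a refutation $\pi$ of $\Phi = A \wedge B$ of size $s$ that respects $\Vbefore \prec (\Vafter \cup \Vshared)$. Lemma~\ref{lem:interpolants} applies directly with the split $A(\Vbefore,\Vshared)\wedge B(\Vafter,\Vshared)$. The only hypothesis to check is that $\Vbefore$ and $\Vafter$ are disjoint and that no clause mixes them, and Definition~\ref{def:total_node_cut} guarantees this. The lemma therefore gives an interpolant CNF from $A$ to $B$ with at most $s$ clauses. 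Theorem~\ref{thm:miter_lower_bound} says every such CNF has at least $2^n$ clauses, so $s \ge 2^n$.

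For the reverse order $\Vafter \prec (\Vbefore \cup \Vshared)$, I would apply Lemma~\ref{lem:interpolants} with the roles of the two sides exchanged. The lemma is symmetric in its two parts: it only uses that the first part's private variables are resolved before all others and that the two private variable sets are disjoint. So I would write $\Phi = B(\Vafter,\Vshared)\wedge A(\Vbefore,\Vshared)$ and take $\Vafter$ as the ``before'' set. A refutation of size $s$ respecting this order then yields an interpolant CNF from $B$ to $A$ of size at most $s$. When $v$ is a $\times$ node, Theorem~\ref{thm:rev_ordering_lower_bound} makes every such CNF exponentially large. When $v$ is a $+$ node, the same argument runs through the restriction from the proof of Theorem~\ref{thm:miter_lower_bound}, together with the mirror-image subfunction claim that the interpolant equals $f$. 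That gives an exponential bound assuming Conjecture~\ref{conj:addgraph}.

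I expect the main obstacle to be verifying that the reversed interpolant really contains $f$ (rather than $\neg f$) as a subfunction under $r$, since Theorem~\ref{thm:rev_ordering_lower_bound} is only stated, not proven. I would argue it by repeating the two-case analysis of Theorem~\ref{thm:miter_lower_bound}. If $\rho$ assigns $\mathbf v_{\mathrm{out}}$ correctly, then $B$ is satisfiable, so the interpolant is $1$. If $\rho$ assigns $\mathbf v_{\mathrm{out}}$ incorrectly, then $A$ is satisfiable by the injectivity argument, so the interpolant, which must be inconsistent with $A$, is $0$. Size then transfers because restrictions do not increase CNF size. Buss's bound (Theorem~\ref{thm:multgraph_lower_bound}) handles $\MultGraph$, and Conjecture~\ref{conj:addgraph} handles $\AddGraph$.
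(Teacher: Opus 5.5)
Your proposal is correct and follows the paper's proof. Lemma~\ref{lem:interpolants} turns a size-$s$ refutation into an interpolant CNF of size at most $s$; Theorem~\ref{thm:miter_lower_bound} then gives the forward order, and exchanging the roles of $A$ and $B$ with Theorem~\ref{thm:rev_ordering_lower_bound} and Conjecture~\ref{conj:addgraph} gives the reverse order. Your additional argument that the reversed interpolant restricts to $f$ (rather than $\neg f$) under $r$ correctly supplies the ``symmetric argument'' that the paper only states for Theorem~\ref{thm:rev_ordering_lower_bound}.
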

\begin{proof}
By Lemma~\ref{lem:interpolants}, a refutation of size $s$ respecting the partial order $\Vbefore \prec (\Vafter \cup \Vshared)$ yields an
interpolant CNF from $A(\Vbefore,\Vshared)$ to $B(\Vafter,\Vshared)$ of size at most $s$. Since every such CNF has size at least $2^n$ by
Theorem~\ref{thm:miter_lower_bound}, $s \geq 2^n$. Applying
Lemma~\ref{lem:interpolants} with the roles of $A$ and $B$ exchanged, the
same reasoning with Theorem~\ref{thm:rev_ordering_lower_bound} and Conjecture~\ref{conj:addgraph} yields the
lower bounds for refutations respecting
$\Vafter \prec (\Vbefore \cup \Vshared)$.
\end{proof}

We remark that the ordering used in the polynomial size ordered resolution proof of $\mathbf{xy}=\mathbf{yx}$ given in~\cite{beame2019toward} does not fall under either of the partial orderings specified by Theorem~\ref{thm:miter_po_res}. Instead, the proof in~\cite{beame2019toward} follows an ordering that interleaves between the two multiplier circuits of the miter, evading these lower bounds.

\subsection{Lower bounds for \textsc{Function Encoding}}
\label{app:fn_encoding}

Janota~\cite{janota2016exponential} proved an exponential partially ordered resolution lower bound for a type of formula family he termed \textsc{Function Encoding}; Lemma~\ref{lem:interpolants} generalizes this result.

For a Boolean function $f: X \rightarrow \{0,1\}$, consider the contradiction $f(X) \wedge \neg f(X)$. Encode $f(X)$ and $\neg f(X)$ as two separate CNF formulas $F^+(T^+,X)$ and $F^-(T^-,X)$, where $T^+$ and $T^-$ are disjoint sets of fresh auxiliary variables. The encoding is valid if for every assignment $\tau$ to $X$, exactly one of $F^{+}(T^{+},\tau)$ and $F^{-}(T^{-},\tau)$ is satisfiable, namely $F^{+}$ when $f(\tau)=1$ and $F^{-}$ when $f(\tau)=0$.

Janota~\cite{janota2016exponential} considered the case where $f(X)$ is the parity function $f_\oplus(x_1, \ldots, x_n) = x_1 \oplus \cdots \oplus x_n$. He showed that for a standard CNF encoding of $f_\oplus(X)$ that produces the formula $F^+ \wedge F^-$, any resolution refutation respecting the partial ordering $(T^+ \cup T^-) \prec X$ is exponentially large in $n$. 

Lemma~\ref{lem:interpolants} immediately generalizes this result to any valid encoding of any Boolean function $f$ that requires a large CNF (e.g. parity or majority). Our lower bound also applies to a less restrictive partial order (see Corollary~\ref{cor:interpolants-janota}).

\begin{theorem}
\label{thm:func-enc}
If every CNF representing $f(X)$ has size at least $s$, then any partially ordered resolution refutation of
$F^+(T^+,X)\wedge F^-(T^-,X)$ respecting the partial order $(T^+)\prec(T^-\cup X)$ has size at least $s$.
\end{theorem}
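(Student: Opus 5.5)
Apply Lemma~\ref{lem:interpolants} with $A := F^+(T^+,X)$, $B := F^-(T^-,X)$, $\Vbefore := T^+$, $\Vafter := T^-$ and $\Vshared := X$. The partial order $(T^+)\prec(T^-\cup X)$ in the statement is exactly $\Vbefore \prec (\Vafter\cup\Vshared)$. The variable sets are disjoint by assumption, and the formula is unsatisfiable because the encoding is valid. So from a $\prec$-ordered refutation of size $s'$, Lemma~\ref{lem:interpolants} gives a CNF $C(X)$ of size at most $s'$ that is an interpolant from $F^+$ to $F^-$.

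It remains to show that any interpolant function $I(X)$ from $F^+$ to $F^-$ must equal $f$. Fix an assignment $\tau$ to $X$.

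\emph{Case $f(\tau)=1$.} By validity, $F^+(T^+,\tau)$ is satisfiable. Since $F^+ \models I$, we get $I(\tau)=1$.

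\emph{Case $f(\tau)=0$.} By validity, $F^-(T^-,\tau)$ is satisfiable. Since $I\wedge F^-$ is unsatisfiable, we get $I(\tau)=0$.

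Hence $I\equiv f$, so $C$ is a CNF representing $f$. By hypothesis $|C|\ge s$, and therefore $s' \ge |C| \ge s$.

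The only point needing care is the size notion: the statement's ``size'' of a CNF must match the quantity bounded in Lemma~\ref{lem:interpolants}, namely the number of boundary clauses, which is at most the number of proof lines. This holds if CNF size is measured in clauses. If size is measured in literals, the bound only degrades by at most a factor of $|X|$, since each boundary clause contains only $X$-variables. I expect no real obstacle. The mild subtlety is that the interpolant is defined only on $X$, and validity fixes its value at every point, so the two cases above cover all assignments.

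Finally, by the argument of Corollary~\ref{cor:interpolants-janota}, the bound also applies to Janota's order $(T^+\cup T^-)\prec X$. This recovers Janota's parity result, because parity requires CNFs of size $2^{n-1}$.
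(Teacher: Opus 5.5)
Your proposal is correct and follows the paper's proof. The paper observes that $f$ is the unique interpolant from $F^+$ to $F^-$ and then invokes Lemma~\ref{lem:interpolants}; your two-case argument from the validity of the encoding is exactly the justification of that uniqueness claim, which the paper leaves implicit.
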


\begin{proof}
Observe that $f(X)$ is the unique interpolant from $F^+(T^+, X)$ to $F^-(T^-,X)$. The theorem then follows directly from Lemma~\ref{lem:interpolants}.
\end{proof}

\section{Undecidability of the Short Proof Family Problem}

A natural question motivated by our work is whether one can completely 
characterize all families of formulas that are easy for CDCL. We show 
that this question cannot be answered in general: even deciding whether 
a given computable family of formulas admits polynomial-size resolution 
proofs is undecidable. This establishes a fundamental barrier to any 
complete separation of easy instances from hard ones, and explains why 
any parameters such as proofdoors cannot give a complete characterization.

\begin{definition}[Family of CNF formulas]
A \emph{family of CNF formulas} is a computable sequence $\mathcal{F}=(F_i)_{i\in\mathbb{N}}$
of CNF formulas, i.e., there exists a computable $G$ with $G(i)=F_i$.
\end{definition}

\begin{definition}[Family Proof Size Problem (FPSP)]
A family $\mathcal{F}=\{F_i\}$ of unsatisfiable CNF formulas is a
\emph{polynomial-proof family (in Resolution)} if there exists a polynomial $p$
such that every $F_i$ has a resolution refutation of size at most $p(|F_i|)$.
Otherwise, it is a \emph{hard-proof} family.

The \emph{Family Proof Size Problem (FPSP)} asks: given an algorithm $G$
that generates a family $\mathcal{F}_G=\{G(i)\}_{i\in\mathbb{N}}$
of unsatisfiable CNF formulas, determine whether $\mathcal{F}_G$ is a
polynomial-proof family or a hard-proof family in the Resolution proof system.
\end{definition}

\begin{theorem}[{\bf FPSP Undecidability Theorem}]\label{thm:fpsp}
The Family Proof Size Problem (FPSP) is undecidable.
\end{theorem}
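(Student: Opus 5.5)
We reduce from the complement of the halting problem (on empty input): from a Turing machine $M$ we compute a generator $G_M$ such that $\mathcal{F}_{G_M}$ is a polynomial-proof family iff $M$ never halts. A decider for FPSP would then decide non-halting, which is impossible.

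We fix two families of unsatisfiable CNFs, computable in $i$. The first is an easy family $E_i$ with short proofs, for instance the $n$-variable formula $x_1 \wedge \neg x_1$ padded to size $\Theta(i)$, or any family with linear-size refutations. The second is a hard family $H_i$ whose resolution refutations have size $2^{\Omega(|H_i|^{\varepsilon})}$, such as the pigeonhole principle $\mathrm{PHP}^{i+1}_i$ (Haken~\cite{haken1985intractability}) or Tseitin formulas on expanders~\cite{tseitin1983complexity}. Both families are already cited in the paper.

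Define $G_M(i)$ as follows. Simulate $M$ for $i$ steps. If $M$ has not halted within $i$ steps, output $E_i$. If it halted at step $t \le i$, output $H_i$. This is computable, and every output is unsatisfiable, as FPSP requires.

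If $M$ never halts, every $G_M(i)=E_i$, so the family is polynomial-proof. If $M$ halts at step $t$, then $G_M(i)=H_i$ for all $i\ge t$. This gives an infinite subfamily whose minimum refutation size exceeds $p(|H_i|)$ for every polynomial $p$ once $i$ is large. The finitely many early members cannot rescue polynomiality, because the definition quantifies over every $F_i$ for a single $p$. So the family is hard-proof, and the reduction is correct.

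The main obstacle is being careful about the definition: the bound is stated relative to $|F_i|$, so we must ensure the hard family's lower bound is superpolynomial in the formula size itself. Haken's bound is $2^{\Omega(i)}$ against size $O(i^3)$, so this holds. We should also note that the generator is a uniform computable object, so the map $M\mapsto G_M$ is computable. Everything else is routine.
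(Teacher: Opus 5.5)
Your proposal is correct and is essentially the paper's proof. The generator simulates $M$ for $i$ steps, outputs a trivially refutable formula while $M$ has not halted (the paper uses $\bot$ itself), and outputs a pigeonhole formula afterwards, so Haken's bound makes the family hard-proof exactly when $M$ halts. Your explicit check that the $2^{\Omega(i)}$ lower bound is superpolynomial in the formula size $O(i^3)$, as the definition of FPSP requires, is a detail the paper leaves implicit.
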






    

\begin{proof}
We reduce the Halting Problem~\cite{turing1936computable} to FPSP. 
Given a Turing Machine $M$ and input $x$, the reduction constructs 
the following algorithm $G$. We run $M$ on $x$ for $t$ steps, with $M$ and $x$ hardcoded.

$$G(t) := \begin{cases} \bot & \text{if } M \text{ has not halted on } x \text{ within } t \text{ steps,} \\ \mathrm{PHP}_t & \text{otherwise.} \end{cases}$$

\noindent Since simulating $M$ on $x$ for $t$ steps is decidable, $G$ is computable. 
If $M$ never halts on $x$, then $G(t) = \bot$ for all $t$, and the family 
$\{G(t)\}$ has constant-size proofs. If $M$ halts on $x$ at some step $T$, 
then $G(t) = \mathrm{PHP}_t$ for all $t \geq T$, which requires resolution 
refutations of size $2^{\Omega(t)}$~\cite{haken1985intractability}. 
Therefore deciding FPSP would decide the Halting Problem. 
FPSP is undecidable.
\end{proof}

\section{Conclusion}

In this paper, we proposed proofdoors as a proof-theoretic framework for understanding the efficiency of CDCL SAT solvers on structured verification instances. We showed that small proofdoors guarantee polynomial-size resolution refutations and can be exploited by suitable CDCL runs to obtain polynomial-time refutations. As a concrete application, we proved that floating-point addition commutativity formulas admit small proofdoors. Finally, we identified inherent limitations of the framework and showed that the choice of decomposition can be the difference between short and exponential proofs. As a byproduct, we found a new method for proving partially ordered resolution lower bounds. We believe Proofdoors may offer a useful lens for studying other structured industrial SAT instances. Extending the framework to more general forms of decomposition, for instance, tree-like decompositions is a natural and promising direction. Additionally, the CDCL-Proofdoor Theorem currently relies on non-determinism; reducing this reliance while moving toward more practical branching heuristics such as VSIDS is another worthwhile open problem.



\newpage
\bibliography{lipics-v2021-sample-article}

@inproceedings{mitchell1992hard,
  title={Hard and easy distributions of {SAT} problems},
  author={Mitchell, David and Selman, Bart and Levesque, Hector and others},
  booktitle={Aaai},
  volume={92},
  pages={459--465},
  year={1992}
}

@inproceedings{alekhnovich2002satisfiability,
  title={{Satisfiability, branch-width and Tseitin tautologies}},
  author={Alekhnovich, Michael and Razborov, Alexander A},
  booktitle={The 43rd Annual IEEE Symposium on Foundations of Computer Science, 2002. Proceedings.},
  pages={593--603},
  year={2002},
  organization={IEEE}
}

@techreport{ieee754,
  author = "{IEEE}",
  title  = "{IEEE Standard for Floating-Point Arithmetic}",
  institution = "IEEE",
  number = "IEEE Std 754-2019",
  year   = 2019,
  doi    = "10.1109/IEEESTD.2019.8766229"
}

@inproceedings{kilby2005backbones,
  title={Backbones and backdoors in satisfiability},
  author={Kilby, Philip and Slaney, John and Thi{\'e}baux, Sylvie and Walsh, Toby and others},
  booktitle={AAAI},
  volume={5},
  pages={1368--1373},
  year={2005}
}

@inproceedings{newsham2014impact,
  title={Impact of community structure on {SAT} solver performance},
  author={Newsham, Zack and Ganesh, Vijay and Fischmeister, Sebastian and Audemard, Gilles and Simon, Laurent},
  booktitle={International Conference on Theory and Applications of Satisfiability Testing},
  pages={252--268},
  year={2014},
  organization={Springer}
}

@inproceedings{zulkoski2018effect,
  title={The effect of structural measures and merges on {SAT} solver performance},
  author={Zulkoski, Edward and Martins, Ruben and Wintersteiger, Christoph M and Liang, Jia Hui and Czarnecki, Krzysztof and Ganesh, Vijay},
  booktitle={International Conference on Principles and Practice of Constraint Programming},
  pages={436--452},
  year={2018},
  organization={Springer}
}

@article{haken1985intractability,
  title={The intractability of resolution},
  author={Haken, Armin},
  journal={Theoretical computer science},
  volume={39},
  pages={297--308},
  year={1985},
  publisher={Elsevier}
}

@article{beame2019toward,
  title={Toward verifying nonlinear integer arithmetic},
  author={Beame, Paul and Liew, Vincent},
  journal={Journal of the ACM (JACM)},
  volume={66},
  number={3},
  pages={1--30},
  year={2019},
  publisher={ACM New York, NY, USA}
}

@article{janota2016exponential,
  title={On exponential lower bounds for partially ordered resolution},
  author={Janota, Mikol{\'a}{\v{s}}},
  journal={Journal on Satisfiability, Boolean Modelling and Computation},
  volume={10},
  number={1},
  pages={1--9},
  year={2016},
  publisher={SAGE Publications Sage UK: London, England}
}

@article{craig1957three,
  title={Three uses of the {Herbrand}-{Gentzen} theorem in relating model theory and proof theory},
  author={Craig, William},
  journal={The Journal of Symbolic Logic},
  volume={22},
  number={3},
  pages={269--285},
  year={1957},
  publisher={Cambridge University Press}
}

@inproceedings{cheeseman1991really,
  title={Where the really hard problems are},
  author={Cheeseman, Peter},
  booktitle={International Joint Conference on Artificial Intelligence},
  year={1991}
}

@article{biere2009bounded,
  title={Bounded model checking.},
  author={Biere, Armin and Cimatti, Alessandro and Clarke, Edmund M and Strichman, Ofer and Zhu, Yunshan},
  journal={Handbook of satisfiability},
  volume={185},
  number={99},
  pages={457--481},
  year={2009}
}

@article{prasad2001combinational,
  title={Why is combinational {ATPG} efficiently solvable for practical {VLSI} circuits?},
  author={Prasad, Mukul R and Chong, Philip and Keutzer, Kurt},
  journal={Journal of Electronic Testing},
  volume={17},
  number={6},
  pages={509--527},
  year={2001},
  publisher={Springer}
}

@inproceedings{bodlaender1991approximating,
  title={Approximating treewidth, pathwidth, and minimum elimination tree height},
  author={Bodlaender, Hans L and Gilbert, John R and Hafsteinsson, Hj{\'a}lmt{\`y}r and Kloks, Ton},
  booktitle={International Workshop on Graph-Theoretic Concepts in Computer Science},
  pages={1--12},
  year={1991},
  organization={Springer}
}

@article{turing1936computable,
  title={On computable numbers, with an application to the {Entscheidungsproblem}},
  author={Turing, Alan Mathison and others},
  journal={J. of Math},
  volume={58},
  number={345-363},
  pages={5},
  year={1936},
  publisher={Wiley Online Library}
}

@book{biere2009handbook,
  title={Handbook of satisfiability},
  author={Biere, Armin and van Maaren, Hans and Walsh, Toby},
  year={2009},
  publisher={SAGE Publications Limited}
}

@inproceedings{mcmillan2003interpolation,
  title={Interpolation and {SAT}-based model checking},
  author={McMillan, Kenneth L},
  booktitle={International Conference on Computer Aided Verification},
  pages={1--13},
  year={2003},
  organization={Springer}
}

@article{robertson1983graphminors,
  author = {{Neil Robertson and Paul D. Seymour}},
  title = {{Graph Minors I}: Excluding a Forest},
  journal = {Journal of Combinatorial Theory, Series B},
  year = {1983}
}

@article{atserias2011clause,
  title={Clause-learning algorithms with many restarts and bounded-width resolution},
  author={Atserias, Albert and Fichte, Johannes Klaus and Thurley, Marc},
  journal={Journal of Artificial Intelligence Research},
  volume={40},
  pages={353--373},
  year={2011}
}

@inproceedings{pipatsrisawat2009power,
  title={On the power of clause-learning {SAT} solvers with restarts},
  author={Pipatsrisawat, Knot and Darwiche, Adnan},
  booktitle={International Conference on Principles and Practice of Constraint Programming},
  pages={654--668},
  year={2009},
  organization={Springer}
}

@article{clarke2001bounded,
  title={Bounded model checking using satisfiability solving},
  author={Clarke, Edmund and Biere, Armin and Raimi, Richard and Zhu, Yunshan},
  journal={Formal methods in system design},
  volume={19},
  number={1},
  pages={7--34},
  year={2001},
  publisher={Springer}
}

@incollection{tseitin1983complexity,
  title={On the complexity of derivation in propositional calculus},
  author={Tseitin, Grigori S},
  booktitle={Automation of reasoning: 2: Classical papers on computational logic 1967--1970},
  pages={466--483},
  year={1983},
  publisher={Springer}
}

@inproceedings{10.1145/800157.805047,
author = {Cook, Stephen A.},
title = {The complexity of theorem-proving procedures},
year = {1971},
isbn = {9781450374644},
publisher = {Association for Computing Machinery},
address = {New York, NY, USA},
url = {https://doi.org/10.1145/800157.805047},
doi = {10.1145/800157.805047},
booktitle = {Proceedings of the Third Annual ACM Symposium on Theory of Computing},
pages = {151–158},
numpages = {8},
location = {Shaker Heights, Ohio, USA},
series = {STOC '71}
}

@inproceedings{kautz1992planning,
  title={{Planning as Satisfiability.}},
  author={Kautz, Henry A and Selman, Bart and others},
  booktitle={ECAI},
  volume={92},
  pages={359--363},
  year={1992}
}

@inproceedings{xie2005saturn,
  title={Saturn: A {SAT}-based tool for bug detection},
  author={Xie, Yichen and Aiken, Alex},
  booktitle={International Conference on Computer Aided Verification},
  pages={139--143},
  year={2005},
  organization={Springer}
}

@book{davis1958feasible,
  title={Feasible computational methods in the propositional calculus},
  author={Davis, Martin and Putnam, Hilary},
  year={1958},
  publisher={Rensselaer Polytechnic Institute, Research Division}
}

@InProceedings{vinyals_et_al:LIPIcs.SAT.2023.27,
  author =	{Vinyals, Marc and Li, Chunxiao and Fleming, Noah and Kolokolova, Antonina and Ganesh, Vijay},
  title =	{{Limits of CDCL Learning via Merge Resolution}},
  booktitle =	{26th International Conference on Theory and Applications of Satisfiability Testing (SAT 2023)},
  pages =	{27:1--27:19},
  series =	{Leibniz International Proceedings in Informatics (LIPIcs)},
  ISBN =	{978-3-95977-286-0},
  ISSN =	{1868-8969},
  year =	{2023},
  volume =	{271},
  editor =	{Mahajan, Meena and Slivovsky, Friedrich},
  publisher =	{Schloss Dagstuhl -- Leibniz-Zentrum f{\"u}r Informatik},
  address =	{Dagstuhl, Germany},
  URL =		{https://drops.dagstuhl.de/entities/document/10.4230/LIPIcs.SAT.2023.27},
  URN =		{urn:nbn:de:0030-drops-184894},
  doi =		{10.4230/LIPIcs.SAT.2023.27}
}

@inproceedings{zulkoski2018learning,
  title={Learning-sensitive backdoors with restarts},
  author={Zulkoski, Edward and Martins, Ruben and Wintersteiger, Christoph M and Robere, Robert and Liang, Jia Hui and Czarnecki, Krzysztof and Ganesh, Vijay},
  booktitle={International Conference on Principles and Practice of Constraint Programming},
  pages={453--469},
  year={2018},
  organization={Springer}
}

@article{Buss1992GraphMultCounting,
title = {The graph of multiplication is equivalent to counting},
journal = {Information Processing Letters},
volume = {41},
number = {4},
pages = {199-201},
year = {1992},
issn = {0020-0190},
doi = {https://doi.org/10.1016/0020-0190(92)90180-4},
author = {Samuel R. Buss},
}

@book{Parberry1994CircuitComplexityNeuralNetworks,
  author    = {Ian Parberry},
  title     = {Circuit Complexity and Neural Networks},
  publisher = {The MIT Press},
  year      = {1994},
  address   = {Cambridge, MA},
  isbn      = {0262161486},
}

@inproceedings{mull2016hardness,
  title={On the hardness of {SAT} with community structure},
  author={Mull, Nathan and Fremont, Daniel J and Seshia, Sanjit A},
  booktitle={International Conference on Theory and Applications of Satisfiability Testing},
  pages={141--159},
  year={2016},
  organization={Springer}
}

@article{DBLP:journals/jcss/ImpagliazzoP01,
  author       = {Russell Impagliazzo and
                  Ramamohan Paturi},
  title        = {On the Complexity of k-{SAT}},
  journal      = {J. Comput. Syst. Sci.},
  volume       = {62},
  number       = {2},
  pages        = {367--375},
  year         = {2001},
  url          = {https://doi.org/10.1006/jcss.2000.1727},
  doi          = {10.1006/JCSS.2000.1727},
  bibsource    = {dblp computer science bibliography, https://dblp.org}
}

@article{DBLP:journals/jacm/AtseriasM20,
  author       = {Albert Atserias and
                  Moritz M{\"{u}}ller},
  title        = {Automating {Resolution} is {NP}-Hard},
  journal      = {J. {ACM}},
  volume       = {67},
  number       = {5},
  pages        = {31:1--31:17},
  year         = {2020},
  url          = {https://doi.org/10.1145/3409472},
  doi          = {10.1145/3409472},
  bibsource    = {dblp computer science bibliography, https://dblp.org}
}

@article{ansotegui_community_2019,
	title = {Community {Structure} in {Industrial} {SAT} {Instances}},
	volume = {66},
	issn = {1076-9757},
	url = {https://jair.org/index.php/jair/article/view/11741},
	doi = {10.1613/jair.1.11741},
	urldate = {2026-05-13},
	journal = {Journal of Artificial Intelligence Research},
	author = {Ansótegui, Carlos and Bonet, Maria Luisa and Giráldez-Cru, Jesús and Levy, Jordi and Simon, Laurent},
	month = oct,
	year = {2019},
	pages = {443--472},
}

@inproceedings{huang2004dpll,
author = {Huang, Jinbo and Darwiche, Adnan},
title = {Using DPLL for efficient OBDD construction},
year = {2004},
isbn = {354027829X},
publisher = {Springer-Verlag},
address = {Berlin, Heidelberg},
url = {https://doi.org/10.1007/11527695_13},
doi = {10.1007/11527695_13},
booktitle = {Proceedings of the 7th International Conference on Theory and Applications of Satisfiability Testing},
pages = {157–172},
numpages = {16},
location = {Vancouver, BC, Canada},
series = {SAT'04}
}

@inproceedings{imanishi2017upper,
  title={An upper bound for resolution size: Characterization of tractable sat instances},
  author={Imanishi, Kensuke},
  booktitle={International Workshop on Algorithms and Computation},
  pages={359--369},
  year={2017},
  organization={Springer}
}

@INPROCEEDINGS{wang2001cutwidth,
  author={Dong Wang and Clarke, E. and Yunshan Zhu and Kukula, J.},
  booktitle={Sixth IEEE International High-Level Design Validation and Test Workshop}, 
  title={Using cutwidth to improve symbolic simulation and Boolean satisfiability}, 
  year={2001},
  volume={},
  number={},
  pages={165-170},
  doi={10.1109/HLDVT.2001.972824}}

@article{Krajicek97,
  author  = {Jan Kraj{\'\i}{\v{c}}ek},
  title   = {Interpolation Theorems, Lower Bounds for Proof Systems, and
             Independence Results for Bounded Arithmetic},
  journal = {The Journal of Symbolic Logic},
  volume  = {62},
  number  = {2},
  pages   = {457--486},
  year    = {1997},
  doi     = {10.2307/2275541}
}

@article{Pudlak97,
  author  = {Pavel Pudl{\'a}k},
  title   = {Lower Bounds for Resolution and Cutting Plane Proofs and
             Monotone Computations},
  journal = {The Journal of Symbolic Logic},
  volume  = {62},
  number  = {3},
  pages   = {981--998},
  year    = {1997},
  doi     = {10.2307/2275583}
}
\newpage

\appendix

\section{Multiplicative Miter Instances as a special case of Theorem \ref{thm:proofdoor-main}}
\label{app:miterasthm1}

We now show how Theorem \ref{thm:proofdoor-main} is  more general result subsuming short proof of commutativity of integer multiplication instances. Consider the standard miter construction that checks $\mathbf{xy} \neq \mathbf{yx} $ where the outputs of the two multiplier circuits are compared bitwise. Let $e_0 , e_1 \cdots e_{2n-1}$ be the variables denoting the output difference variables, i.e. $e_i$ is true when output $i$ disagrees. The miter includes the clause $e_0 \vee e_1 \cdots e_{2n-1}$ asserting that some output bit differs.   Resolution refutations of this miter are based on decomposing the multiplier into \emph{critical strips}. Each strip covers a window of $\Delta = \log n$ consecutive columns of the multiplier array and contains all local constraints relevant to that window, including partial products, carry propagation, and the corresponding output bits. Under a partial assignment that sets exactly one output-difference bit  $e_k =1 $ and all earlier bits to $0$, the corresponding critical strip becomes unsatisfiable and admits a small ordered resolution refutation. 
  
We recast this argument in the proofdoor framework by defining a sequence of
(overlapping) chunks $A_0, A_1, \ldots, A_{2n-1}$, where each chunk $A_j$
is obtained from the critical strip $C_j$ (with the output (in)equality
constraints omitted) and consists of all multiplier constraints that affect
the outputs $e_{j-\Delta}, \ldots, e_j$. Intuitively, chunk $A_j$  captures the local constraints needed to rule out the possibility that the first output disagreement occurs within the window ending at position $j$. After refuting the prefix $A_0 \land \cdots A_j$ resolution derives the clause  $e_j$ along with  $e_{j+1} \vee \cdots e_{2n-1}$. Chunk $A_j$ shares the variables $e_{j - \Delta + 1} \cdots e_j$ with the subsequent formula. We therefore have the interpolant
$$
   I_j \;:=\; 
      (\neg e_{j-\Delta+1} \wedge \cdots \wedge \neg e_j)
      \;\wedge\;
      (e_{j-\Delta} \;\vee\; e_{j-\Delta+1} \;\vee\; \cdots \;\vee\; e_{2n-1}).
$$

Here $I_j$ is defined only on shared variables with the future chunks and also the number of clauses in $I_j$ is $O(\log n)$.

\section{DNF lower bound for \(\mathrm{EQ}_n\)}
\label{app:EQ_lower_bound}

We prove Lemma~\ref{EQ_lower_bound}, restated below.

\begingroup
\setcounter{theorem}{\numexpr\getrefnumber{EQ_lower_bound}-1\relax}
\begin{lemma}
\label{lem:EQ_lower_bound_restate}
Let \(f_{\EQ_n}:\{0,1\}^{2n}\to\{0,1\}\) be the equality function
$$f_{\EQ_n}(x,z) = \bigwedge_{i=0}^{n-1}(x_i=z_i).$$
Then any DNF computing \(f_{\EQ_n}\) has at least \(2^n\) terms.
\end{lemma}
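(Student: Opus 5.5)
The argument is the standard one: no term of a DNF for \(f_{\EQ_n}\) can cover more than one satisfying assignment. The function has exactly \(2^n\) accepting inputs, namely the pairs \((a,a)\) for \(a\in\{0,1\}^n\). Let \(D=T_1\vee\cdots\vee T_m\) be any DNF computing \(f_{\EQ_n}\). Each accepting input \((a,a)\) satisfies some term of \(D\), and every term \(T\) of \(D\) is an implicant, i.e.\ \(T\models f_{\EQ_n}\). It therefore suffices to show that each term \(T\) is satisfied by at most one accepting input. Then the map sending each \((a,a)\) to a term it satisfies is injective, and so \(m\ge 2^n\).

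The key step is as follows. Fix a term \(T\) of \(D\) and suppose \(T\) is satisfied by both \((a,a)\) and \((b,b)\) with \(a\ne b\). Pick an index \(i\) with \(a_i\neq b_i\). First observe that \(T\) cannot mention \(x_i\): the literal would have to agree with both \(a_i\) and \(b_i\), which differ. By the same reasoning \(T\) cannot mention \(z_i\). Now take the input \((a,a)\) and flip only the coordinate \(z_i\). Every literal of \(T\) still evaluates as it did on \((a,a)\), because \(z_i\) does not occur in \(T\), so \(T\) remains satisfied. But the new input has \(x_i=a_i\ne \neg a_i=z_i\), so \(f_{\EQ_n}\) is \(0\) there. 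This contradicts \(T\models f_{\EQ_n}\).

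I expect no real obstacle; the only point requiring care is justifying that \(T\) omits both \(x_i\) and \(z_i\), which follows because \(T\) is a conjunction of literals satisfied by two assignments that disagree on those coordinates. Combining this with the injectivity argument gives at least \(2^n\) terms, which completes the lemma.
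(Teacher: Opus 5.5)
Your proof is correct and is essentially the paper's argument. Both proofs flip a single coordinate that a term does not mention, reaching a rejecting input that the term still accepts. The paper phrases this as ``every term mentions all $2n$ variables, hence is satisfied by at most one assignment,'' whereas you show directly that each term accepts at most one pair $(a,a)$, which also spares you the paper's assumption that no term is identically false.
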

\endgroup
\begin{proof}
Let $D = \bigvee_{j=1}^m T_j$
be a DNF computing $f_{\EQ_n}$, where each $T_j$ is a conjunction of literals over the variables $x_0,\dots,x_{n-1},z_0,\dots,z_{n-1}.$
Without loss of generality, assume no term is identically false and no term contains repeated literals. 

We claim that every term $T_j$ must mention every variable $x_i$ and every variable $z_i$. Fix a term $T=T_j$. Suppose that some term $T$ omits $x_i$ (the $z_i$ case is identical). Since $T$ is not identically false, there exists an assignment $(x,z)$ such that $T(x,z)=1$. Then $D(x,z)=1$, so $f_{\EQ_n}(x,z)=1$, which implies $x=z$. Now flip only the bit $x_i$, obtaining a new assignment $(x',z)$. Because $T$ does not mention $x_i$, we still have $T(x',z)=1$, hence $D(x',z)=1$. But now $x'\neq z$, so $f_{\EQ_n}(x',z)=0$, a contradiction.

Therefore each term contains one literal on each of the $2n$ variables, and hence is satisfied by at most one assignment in $\{0,1\}^{2n}$. Since $f_{\EQ_n}(x,z)=1$ on exactly the $2^n$ assignments with $x=z$, the DNF $D$ must have at least $2^n$ terms.
\end{proof}

\section{Implementation details for Floating-Point Addition Commutativity}

\label{app:floap}

\paragraph{Circuit implementation.}

\begin{enumerate}

\item { \bf Exponent comparison stage.}

We consider the addition of two positive normalized floating-point numbers
\(a = (E_a,M_a)\) and \(b = (E_b,M_b)\), where \(E\) denotes the exponent of $m$ bits and
\(M\) the mantissa of $n$ bits.

The exponent comparison stage computes bitwise equality and greater-than signals.  
Given \((E_a,E_b)\), for each bit position \(i\) the comparator produces
\[
\text{eq}_i = (E_{a,i} \leftrightarrow E_{b,i}),
\]
and defines a the signals:
\[
p_{m-1} = 1,
\qquad
p_i = (p_{i+1} \wedge \text{eq}_{i+1})
\quad \text{for } 0 \le i \le m-2.
\]

The greater-than and less-than bits are computed as
\[
\text{GT}_i = (p_i \wedge E_{a,i} \wedge \neg E_{b,i}),
\qquad
\text{LT}_i = (p_i \wedge \neg E_{a,i} \wedge E_{b,i}),
\]
and aggregated as
\[
\text{GT} = \bigvee_i \text{GT}_i,
\qquad
\text{LT} = \bigvee_i \text{LT}_i,
\qquad
\text{EQ} = \bigwedge_i \text{eq}_i.
\]

\begin{itemize}
\item {\bf CNF size:}  
Bitwise equalities add $m$ variables and $4m$ clauses.  
The $GT_i$ and $LT_i$ clauses add $m$ variables and $6m$ clauses.  
The $\text{GT}_i$ and $\text{LT}_i$ definitions add $2m$ variables and $12m$ clauses.
Aggregation adds $O(m)$ clauses.
\end{itemize}

\item {\bf Exponent selection.}

The larger and smaller exponents are selected via standard 2-input multiplexers:
\[
E_{large} = \mathrm{MUX}(\text{GT};\,E_a,E_b),\qquad
E_{small} = \mathrm{MUX}(\text{LT};\,E_a,E_b).
\]

\begin{itemize}
\item {\bf CNF size:} Adds $2m$ variables and $8m$ clauses.
\end{itemize}

\item {\bf Mantissa selection.}

Similarly, the corresponding mantissas are selected as
\[
M_{\text{large}} = \mathrm{MUX}(\text{GT};\,M_a,M_b),\qquad
M_{\text{small}} = \mathrm{MUX}(\text{GT};\,M_b,M_a).
\]

\begin{itemize}
\item {\bf CNF size:} Adds $2n$ variables and $8n$ clauses.
\end{itemize}

\item {\bf Exponent difference.}

The exponent difference is then computed using a conventional binary subtractor,
\[
\text{Diff} = E_{large} - E_{small}.
\]

\begin{itemize}
\item {\bf CNF size:} Adds $O(m)$ variables and $O(m)$ clauses.
\end{itemize}

\item {\bf Alignment stage.}

The smaller mantissa is aligned by a logical right shifter,
\[
M'_{\text{small}} = \mathrm{Shifter}(M_{\text{small}},\ \text{Diff}),
\]
which also produces the \emph{guard} (\(G\)), \emph{round} (\(R\)), and
\emph{sticky} (\(S\)) bits.  
Here \(G\) is the first bit below the stored LSB, \(R\) is the bit below \(G\),
and \(S\) is the OR of all remaining shifted-out bits.

\begin{itemize}
\item {\bf CNF size:} Barrel shifter with $m$ stages and $n$ MUXes per stage adds $nm$ variables and $4nm$ clauses (GRS logic contributes at most $O(nm)$ additional clauses).
\end{itemize}

\item {\bf Significand addition.}

After alignment, the circuit proceeds with significand addition. It produces
\[
\Sigma = M_{large } + M'_{small}.
\]

\begin{itemize}
\item {\bf CNF size:} Ripple adder adds $O(n)$ variables and $O(n)$ clauses.
\end{itemize}

\item {\bf Normalization.}

Let \(c\) denote the carry-out of this adder. Since both inputs are positive and normalized, the only possible normalization case after addition is overflow by one bit. This occurs precisely when the carry-out \(c=1\). Concretely, for the most significant bit we set
\[
\Sigma_{\text{norm}}[n-1] = \mathrm{MUX}(c;\, 1,\, \Sigma[n-1])
\]
and for all remaining bit positions $0 \le i \le n-2$ we wire
\[
\Sigma_{\text{norm}}[i] = \mathrm{MUX}(c;\, \Sigma[i+1],\, \Sigma[i]).
\]

The guard, round, and sticky bits are updated as:
\[
G_{\text{norm}} = \mathrm{MUX}(c; \Sigma[0] , G),
\]
\[
R_{\text{norm}} = \mathrm{MUX}(c; G ,R),
\]
\[
S_{\text{norm}} =
\mathrm{MUX}(c; R\vee S , S).
\]

\begin{itemize}
\item {\bf CNF size:} Uses $n$ MUXes, adding $n$ variables and $4n$ clauses (plus $O(1)$ for guard/round/sticky updates).
\end{itemize}

\item {\bf Exponent update.}

The exponent update is
\[
E_{\text{out}} = E_{\text{large}} + c.
\]

\begin{itemize}
\item {\bf CNF size:} $m$-bit incrementer adds $O(m)$ variables and $O(m)$ clauses.
\end{itemize}

\item {\bf Rounding.}

We assume round-to-nearest, ties-to-even.  
Let $\Sigma_{\text{norm}}[0]$ denote the least significant bit.  
The rounding increment is
\[
\text{inc} = R_{\text{norm}} \wedge
\bigl(\Sigma_{\text{norm}}[0] \vee S_{\text{norm}}\bigr).
\]

\begin{itemize}
\item {\bf CNF size:} Adds $O(1)$ variables and $O(1)$ clauses.
\end{itemize}

\item {\bf Rounded significand.}

The rounded significand is
\[
\Sigma_{\text{rnd}} = \Sigma_{\text{norm}} + \text{inc}.
\]

\begin{itemize}
\item {\bf CNF size:} $n$-bit incrementer adds $O(n)$ variables and $O(n)$ clauses.
\end{itemize}

\item {\bf Final shift and exponent update.}

Let $\kappa$ denote the carry-out of this incrementer.
A final conditional shift is applied only if $\kappa=1$:
\[
\Sigma_{\text{final}}[n-1] = \mathrm{MUX}(\kappa;\, 1,\, \Sigma_{\text{rnd}}[n-1]),
\]
\[
\Sigma_{\text{final}}[i] =
\mathrm{MUX}(\kappa;\, \Sigma_{\text{rnd}}[i+1],\, \Sigma_{\text{rnd}}[i])
\quad (0 \le i \le n-2).
\]

The exponent is updated as
\[
E_{\text{final}} = E_{\text{out}} + \kappa.
\]

\begin{itemize}
\item {\bf CNF size:} Uses $n$ MUXes, adding $n$ variables and $4n$ clauses.  
$m$-bit incrementer adds $O(m)$ variables and $O(m)$ clauses.
\end{itemize}

\end{enumerate}

{\bf Output and overall size.}

The output of the adder is given as $(\Sigma_{final} , E_{final})$.

The overall encoding introduces $O(nm + n + m)$ variables and clauses.

\paragraph{Proofdoor Decomposition and Small Proofdoor for Commutativity of Floating point addition}
We construct a miter between two identical circuits,
$$
L = \mathrm{add}(a,b),\qquad R = \mathrm{add}(b,a),
$$
and aim to show that all corresponding output bits agree. To encode inequivalence, we introduce error variables for each output bit.
For each exponent bit $i$ and mantissa bit $j$, define
$$
e^E_i \leftrightarrow (E_{\text{final},L}[i] \oplus E_{\text{final},R}[i]),
\qquad
e^M_j \leftrightarrow (\Sigma_{\text{final},L}[j] \oplus \Sigma_{\text{final},R}[j]).
$$

The miter formula asserts that at least one output bit differs, thus the SAT instance encodes
$$
\mathcal{F} \;=\; (L) \;\wedge\; (R) \;\wedge\; \left( \bigvee_i e^E_i \;\vee\; \bigvee_j e^M_j  \right)
$$

We show how we can decompose the circuit into small pathwidth chunks and the corresponding small interpolant conditions. We show the pathwidth analysis for a single side of the circuit, the small pathwidth for both Left and Right subparts immediately follows. The decomposition below partitions the miter into $O(m+n)$ chunks in total: 
stages $A_1$, $A_2$, and $A_6$ contribute $O(1)$ chunks each, 
stages $A_3$, $A_4$, and $A_7$ contribute $O(m)$ chunks each by slicing 
bit-by-bit, and stages $A_5$, $A_8$, and $A_9$ contribute $O(n)$ chunks 
each, giving $k = O(m+n)$ overall.

\begin{enumerate}
    \item \textbf{$A_1$}:

    We take the first chunk $A_1$ to be the full exponent comparator of both the left and right circuits. We first analyse the pathwidth of the comparator for a single circuit. The comparator consists of the equality gates $eq_i$, the chain variables $p_i$, the local comparison bits $GT_i,LT_i$, and the aggregate outputs $GT,LT,EQ$. We give a path decomposition with one bag $B_i$ per bit position $i$. Each bag $B_i$ contains the variables $E_{a,i},E_{b,i}$, $eq_i,eq_{i+1}$, $p_i,p_{i+1}$, $GT_i,LT_i$, and the aggregate outputs $GT,LT,EQ$. All clauses of the comparator CNF are covered by these bags. The clauses encoding $eq_i$ involve only $E_{a,i},E_{b,i},eq_i$ and lie in $B_i$. The clauses encoding $p_i$ involve $p_i,p_{i+1},eq_{i+1}$ and are contained in $B_i$. The clauses encoding $GT_i$ and $LT_i$ involve only $p_i,E_{a,i},E_{b,i},GT_i$ or $LT_i$, so they are also contained in $B_i$. For the aggregation $GT = \bigvee_i GT_i$, the clauses $(\neg GT_i \vee GT)$ lie in $B_i$, while the single clause $(\neg GT \vee GT_1 \vee \cdots \vee GT_m)$ contains only variables already present in the bags and can therefore be placed in all bags. The same argument applies to $LT$ and $EQ$. The bags are arranged in order $B_0,B_1,\dots,B_{m-1}$, and consecutive bags share the variables $p_{i+1}$ and $eq_{i+1}$, ensuring the running-intersection property. Since each bag contains only constantly many variables, the comparator chunk of a single circuit has constant clause--variable incidence pathwidth. Finally, this argument was for a single circuit. The comparator chunk of the miter contains both the left and right circuits. Since each copy has constant pathwidth, we obtain a path decomposition for the combined chunk by merging the corresponding bags of the two decompositions, which still yields constant width. The interpolant $I_1$ consists of the input-bit equalities together with the three comparator symmetries $GT_L \leftrightarrow LT_R$, $LT_L \leftrightarrow GT_R$, and $EQ_L \leftrightarrow EQ_R$. Thus $I_1$ has size $O(m+n)$ and the pathwidth of the chunk is $O(1)$.

    \item \textbf{$A_2$}

    We next consider the exponent and mantissa selection stage, which uses the comparator outputs to select the larger and smaller exponent and the corresponding mantissas, producing $E_{\text{large}},E_{\text{small}},M_{\text{large}},M_{\text{small}}$. This chunk is again a collection of bitwise multiplexers, so it has constant clause--variable incidence pathwidth: for each bit position, a bag contains the relevant input bits, the selector bit, and the corresponding output bit, and the constant-size Tseitin clauses of that MUX are covered within that bag. The interpolant for this chunk consists only of the equalities on the selected outputs, namely $E_{\text{large},L} \leftrightarrow E_{\text{large},R}$, $E_{\text{small},L} \leftrightarrow E_{\text{small},R}$, $M_{\text{large},L} \leftrightarrow M_{\text{large},R}$, and $M_{\text{small},L} \leftrightarrow M_{\text{small},R}$. Thus its size is $O(m+n)$. Moreover, each interpolant clause has constant support in the previous interpolant: the equality for one selected output bit follows from the constant-size MUX encoding together with only the corresponding selector equality and the corresponding input equalities from the previous interpolant. Hence every clause of this interpolant depends on only constantly many clauses of the previous interpolant.

    \item \textbf{$A_3$}

    We next consider the exponent-difference stage, but cut it into bit-slices rather than taking the full subtractor as one chunk. Thus for each bit position \(i\), we introduce a chunk \(A_{3,i}\) containing the local subtractor cell that computes the difference bit \(\mathit{Diff}_i\) and the outgoing borrow from the input bits \(E_{\mathrm{large},i},E_{\mathrm{small},i}\) and the incoming borrow. Each such chunk has constant clause--variable incidence pathwidth, since it consists of a single constant-size subtractor cell and its Tseitin encoding. The interpolant after chunk \(A_{3,i}\) is obtained by propagating the previous interpolant and adding the equalities produced at this slice. In particular, at slice \(i\) we add the clauses \(\mathit{Diff}_{i,L} \leftrightarrow \mathit{Diff}_{i,R}\) together with the equality of the outgoing borrow bit. Thus the interpolant after slice \(i\) contains the equalities for all difference bits computed so far as well as the current borrow equality. Each of these new clauses depends on only constantly many clauses of the previous interpolant, namely the equalities for \(E_{\mathrm{large},i}\), \(E_{\mathrm{small},i}\), and the incoming borrow, together with the constant-size CNF of the local subtractor cell. Hence the subtractor stage satisfies the small-support condition by construction, and after the final slice the interpolant contains the equalities for all bits of \(\mathit{Diff_L \leftrightarrow \mathit{Diff}_{R}}\).

    \item \textbf{$A_4$}

    We next consider the alignment stage, whose input is the selected smaller mantissa $M_{\mathrm{small}}$ and whose shift amount is determined by the exponent difference $\mathrm{Diff}$. Recall that the exponents have $m$ bits and the mantissas have $n$ bits. Accordingly, the alignment circuit is implemented as an $n$-bit logical right barrel shifter with $m$ stages, where stage $j$ conditionally shifts by $2^j$ positions according to the bit $\mathrm{Diff}_j$. Let $X^{(j)}=(x^{(j)}_{n-1},\dots,x^{(j)}_0)$ denote the intermediate word after stage $j$, with $X^{(0)}=M_{\mathrm{small}}$ and $X^{(m)}=M'_{\mathrm{small}}$. For each stage $j$ and bit position $i$, the output bit $x^{(j+1)}_i$ is defined by a 2-input multiplexer whose inputs are $x^{(j)}_i$ and either $x^{(j)}_{i+2^j}$ or $0$, depending on whether $i+2^j<n$. 
    
    We treat each stage as a separate chunk. The pathwidth argument is the same as in the previous MUX-based stage: for a single circuit, we take one bag per bit position $i$ containing the selector bit $\mathrm{Diff}_j$, the two inputs of the MUX computing $x^{(j+1)}_i$, and the output bit $x^{(j+1)}_i$. Since the Tseitin encoding of each MUX contributes only constantly many clauses over these variables, every clause of the stage is covered by the corresponding bag, and the bags form a path indexed by $i$. Thus each shifter stage has constant clause--variable incidence pathwidth, and the corresponding left-right miter chunk also has constant width by merging the two decompositions. 
    
    The interpolant after stage $j$ is obtained by propagating the previous interpolant equalities for the bits of $X^{(j)}$ together with the equality $\mathrm{Diff}_{j,L}\leftrightarrow \mathrm{Diff}_{j,R}$, and adding the equalities $x^{(j+1),L}_i \leftrightarrow x^{(j+1),R}_i$ for all $i$. Each such new clause depends on only constantly many clauses of the previous interpolant, namely the equality of the corresponding selector bit and the equalities of the corresponding MUX input wires, together with the constant-size CNF of the local MUX gate. After the final stage, the interpolant contains the equalities for all bits of the aligned mantissa $M'^{\,L}_{\mathrm{small},i} \leftrightarrow M'^{\,R}_{\mathrm{small},i}$ for all $i$ along with the exponent equalities
    
    \item \textbf{$A_5$}

    We next consider the significand addition stage, which computes $\Sigma = M_{\mathrm{large}} + M'_{\mathrm{small}}$. As in the exponent-difference stage $A_3$, we cut the ripple-carry adder into bit-slices rather than taking the full adder as one chunk. The pathwidth and constant-support arguments are identical to $A_3$, since each slice consists of a constant-size full-adder cell and its Tseitin encoding. The interpolant before slice $i$ contains the propagated equalities $E_{\mathrm{large},L} \leftrightarrow E_{\mathrm{large},R}$ together with $M_{\mathrm{large},i,L} \leftrightarrow M_{\mathrm{large},i,R}$, $M'_{\mathrm{small},i,L} \leftrightarrow M'_{\mathrm{small},i,R}$, and the equality of the incoming carry, from which we derive $\Sigma_{i,L} \leftrightarrow \Sigma_{i,R}$ and the equality of the outgoing carry. After the final slice, the interpolant contains $\Sigma_{i,L} \leftrightarrow \Sigma_{i,R}$ for all $i$ together with the propagated exponent equality.

    \item \textbf{$A_6$}

    We next consider the normalization stage. Let $c$ denote the carry-out of the significand adder. The normalized mantissa $\Sigma_{\mathrm{norm}}$ and updated exponent $E_{\mathrm{out}}$ are computed using multiplexers controlled by $c$: if $c=1$ the mantissa is shifted right by one position and the exponent is incremented, otherwise the values are propagated unchanged. This chunk consists only of bitwise MUX gates, so the pathwidth and constant-support arguments are identical to $A_2$. 
    
    The interpolant propagates the equalities $E_{\mathrm{large},L} \leftrightarrow E_{\mathrm{large},R}$ and $c_L \leftrightarrow c_R$ together with $\Sigma_{i,L} \leftrightarrow \Sigma_{i,R}$, from which we derive $\Sigma_{\mathrm{norm},i,L} \leftrightarrow \Sigma_{\mathrm{norm},i,R}$ and $E_{\mathrm{out},L} \leftrightarrow E_{\mathrm{out},R}$. Thus the interpolant after this stage contains the equalities for all bits of $\Sigma_{\mathrm{norm}}$ and the updated exponent.

    \item \textbf{$A_7$}

    We next consider the exponent update stage. The mantissa is propagated unchanged, so the interpolant simply carries the equalities $\Sigma_{\mathrm{norm},L} \leftrightarrow \Sigma_{\mathrm{norm},R}$. The exponent update computes $E_{\mathrm{out}} = E_{\mathrm{large}} + c$. As in stages $A_3$ and $A_5$, we cut this addition into bit-slices, and the pathwidth and constant-support arguments are identical to $A_3$. From the propagated equalities $E_{\mathrm{large},L} \leftrightarrow E_{\mathrm{large},R}$ and $c_L \leftrightarrow c_R$, we derive $E_{\mathrm{out},L} \leftrightarrow E_{\mathrm{out},R}$ bitwise. After the final slice, the interpolant contains the equalities for all bits of $E_{\mathrm{out}}$ together with the propagated mantissa equalities.

    \item \textbf{$A_8$}
    
    We next consider the rounding stage. The rounding increment is computed as $\mathrm{inc} = R_{\mathrm{norm}} \wedge (\Sigma_{\mathrm{norm}}[0] \vee S_{\mathrm{norm}})$, which introduces only $O(1)$ variables and clauses. The rounded significand is then computed as $\Sigma_{\mathrm{rnd}} = \Sigma_{\mathrm{norm}} + \mathrm{inc}$. As in stages $A_3$ and $A_5$, we cut this addition into bit-slices, so the pathwidth and constant-support arguments are identical to $A_3$. The interpolant propagates the equalities $\Sigma_{\mathrm{norm},L} \leftrightarrow \Sigma_{\mathrm{norm},R}$ together with $\mathrm{inc}_L \leftrightarrow \mathrm{inc}_R$, from which we derive $\Sigma_{\mathrm{rnd},L} \leftrightarrow \Sigma_{\mathrm{rnd},R}$ bitwise and the equality of the outgoing carry. After the final slice, the interpolant contains the equalities for all bits of $\Sigma_{\mathrm{rnd}}$.

    \item $A_9$
    
    The final chunk contains the conditional shift defining $\Sigma_{\mathrm{final}}$, the exponent update $E_{\mathrm{final}}=E_{\mathrm{out}}+\kappa$, and the disagreement clauses. We cut the mantissa computation into bit-slices. In the first slice (the MSB mantissa bit $i=n-1$), the chunk contains the MUX defining $\Sigma_{\mathrm{final}}[n-1]$, the clauses encoding $e_{n-1}^M \leftrightarrow (\Sigma_{\mathrm{final},L}[n-1] \oplus \Sigma_{\mathrm{final},R}[n-1])$, and the disagreement clause $\bigvee_{j=0}^{n-1} e_j^M \;\vee\; \bigvee_{t=0}^{m-1} e_t^E$. From the previous interpolant we have $\kappa_L \leftrightarrow \kappa_R$ and $\Sigma_{\mathrm{rnd},L}[n-1] \leftrightarrow \Sigma_{\mathrm{rnd},R}[n-1]$, which implies $\Sigma_{\mathrm{final},L}[n-1] \leftrightarrow \Sigma_{\mathrm{final},R}[n-1]$ and hence $\neg e_{n-1}^M$. Together with the disagreement clause this implies the shortened clause $\bigvee_{j=0}^{n-2} e_j^M \;\vee\; \bigvee_{t=0}^{m-1} e_t^E$. The interpolant after this slice therefore consists of the previous interpolant equalities for the remaining mantissa and exponent bits, $\kappa_L \leftrightarrow \kappa_R$, and the shortened clause above.

    For each remaining mantissa bit $0\le i\le n-2$, the chunk contains the MUX defining $\Sigma_{\mathrm{final}}[i]$ and the clauses encoding $e_i^M \leftrightarrow (\Sigma_{\mathrm{final},L}[i] \oplus \Sigma_{\mathrm{final},R}[i])$. From the previous interpolant we obtain $\Sigma_{\mathrm{final},L}[i] \leftrightarrow \Sigma_{\mathrm{final},R}[i]$, which shortens the clause by removing $e_i^M$. After the mantissa bits are exhausted, we slice the addition $E_{\mathrm{out}}+\kappa$ bit-by-bit from $0$ to $m-1$ exactly as in $A_3$, using the clauses $e_i^E \leftrightarrow (E_{\mathrm{final},L}[i] \oplus E_{\mathrm{final},R}[i])$ to eliminate the remaining literals.
    
    For the pathwidth argument, the MSB mantissa slice contains the MUX variables, the XOR-definition clauses for $e_{n-1}^M$, and the disagreement clause, which introduces all $e$ variables but only a constant blow-up in the pathwidth. Hence, this slice has a bounded clause–variable incidence pathwidth. All remaining mantissa slices contain strictly fewer variables because they do not include the disagreement clause and therefore have constant pathwidth. The exponent slices are identical to $A_3$ and hence have constant pathwidth. Moreover, each interpolant derivation depends only on the previous interpolant and the shortened clause, so the support size remains constant. The final interpolant is therefore the last shortened clause, which is a unit and has constant size.

\end{enumerate}

\end{document}